\newcommand{\lncsarxiv}[2]{#2}
\let\doendproof\endproof
\def\endproof{\hfill $\qed$\doendproof}
\DeclareMathOperator{\bc}{bc}
\DeclareMathOperator{\g}{g}
\newcommand{\bco}{{\ensuremath{\bc^\circ}}}
\newcommand{\fedges}{F}
\newcommand{\fvertices}{V_F}
\newcommand{\surfgk}{\mathcal{S}}
\newcommand{\surfsub}{\Omega}
\newcommand{\review}[1]{}
\newtheorem{observation}{Observation}
\begin{document}

\title{Bundled Crossings Revisited\thanks{\lncsarxiv{The
      full version of this article is available at
      ArXiv~\cite{full-version-arxiv}. }{}M.K.\ was supported by
    DAAD; S.C.\ was supported by DFG grant WO$\,$758/11-1.}}

\author{Steven Chaplick\inst{1}\lncsarxiv{$^{\textrm{(\Letter)}}$\orcidID{0000-0003-3501-4608}}{}
\and Thomas~C.~van~Dijk\inst{1}\lncsarxiv{\orcidID{0000-0001-6553-7317}}{}
\and Myroslav Kryven\inst{1}
\and Ji-won Park\inst{2}
\and Alexander~Ravsky\inst{3}
\and Alexander Wolff\inst{1}\lncsarxiv{\orcidID{0000-0001-5872-718X}}{}
}

\authorrunning{S.~Chaplick et al.}

\institute{
Universit\"at W\"urzburg, W\"urzburg,
Germany\\ \email{\lncsarxiv{steven.chaplick@uni-wuerzburg.de}{firstname.lastname@uni-wuerzburg.de}}
\and
KAIST, Daejeon, Republic of
Korea \lncsarxiv{}{\\\email{wldnjs1727@kaist.ac.kr}}
\and
Pidstryhach Institute for Applied Problems of Mechanics
and Mathematics, \\
National Academy of Sciences of Ukraine, Lviv,
Ukraine \lncsarxiv{}{\\\email{alexander.ravsky@uni-wuerzburg.de}}
}
\maketitle              %

\setcounter{footnote}{0} %

\begin{abstract}
  An effective way to reduce clutter in
  a graph drawing that has (many) crossings is to group
  edges that travel in parallel into \emph{bundles}.
  Each edge can participate in many such bundles.
  Any crossing in this bundled
  graph occurs between two bundles, i.e., as a \emph{bundled crossing}.
  We consider the problem of bundled crossing minimization:
  A graph is given and the goal is to find
  a bundled drawing with at most $k$ bundled crossings.
  We show that the problem is NP-hard when we require a simple drawing.
  Our main result is an FPT algorithm (in $k$)
  when we require a simple circular layout.
  These results make use of the connection between bundled crossings
  and graph genus.
\end{abstract}

\section{Introduction}

\review{R1:
Figure captions are not handled consistently, sometimes subfigures
are indicated by (a), (b), etc (e.g. Figure 6), sometimes only by
semicolons (e.g. Figure 4), but in other figures semicolons have a
different function (e.g. Figure 2). Can you please be consistent?}
In traditional node--link diagrams,
vertices are mapped to points in
the plane and edges are usually drawn as straight-line segments
connecting the vertices.  For large and dense graphs,
however, such layouts tend to be so cluttered that it is hard to see
any structure in the data.  For this reason, Holten~\cite{h-hebva-TVCG06}
introduced \emph{bundled drawings}, where edges that are close
together and roughly go into the same direction are drawn using
B\'ezier curves such that the grouping becomes visible.  Due to the
practical effectiveness of this approach, it has quickly been adopted
by the InfoVis community
\cite{Cui_2009,pupyrev11,Gansner_2011,Hurter_2012,Hurter_2014}.
However, bundled drawings have only recently attracted study from a
theoretical point of view
\cite{afp-bcn-GD16, fhsv-bceg-LATIN16, fpw-omlbbc-JGAA15,
dfflmrsw-bcsv-JGAA17}.

Crossing minimization is a fundamental problem in graph
drawing~\cite{s-gcnvs-EJC17}.
Its natural generalization in bundled drawings
is bundled crossing minimization,
see Definition~\ref{def:bundledcrossing} for the formalization of a
bundled crossing.
In his survey on crossing
minimization, Schaefer lists the bundled
crossing number as a variant of the crossing number and suggests to
study it \cite[page~35]{s-gcnvs-EJC17}.

\paragraph{Related Work.}
Fink et al.~\cite{fpw-omlbbc-JGAA15} considered bundled crossings
(which they called block crossings) in the context of drawing metro maps.
A metro network is a
planar graph where vertices are stations and metro lines are simple paths
in this graph. These paths representing metro lines can share edges.
They enter an edge at one endpoint
in some linear order, follow the edge as x-monotone curves
(considering the edge as horizontal),
and then leave the edge at the other endpoint in some linear order.
In order to improve the readability of metro maps, the authors
suggested to bundle crossings.
The authors then studied the problem of minimizing bundled crossings
in such metro maps.
Fink et al.\ also introduced \emph{monotone}
bundled crossing minimization where each pair of lines can intersect
at most once.
Later, Fink et al.~\cite{dfflmrsw-bcsv-JGAA17} applied the concept
of bundled crossings to drawing storyline visualizations.
A storyline visualization is a set of x-monotone curves where the
x-axis represents time in a story.  Given a set of \emph{meetings}
(subsets of the curves that must be consecutive at given
points in time), the task is to find a drawing that realizes
the meetings and minimizes the number of bundled crossings.
Fink et al.\ showed that, in this setting, minimizing bundled
crossings is fixed-parameter tractable (FPT)
and can be approximated in a restricted case.
Our research builds on recent works of Fink et
al.~\cite{fhsv-bceg-LATIN16} and Alam et al.~\cite{afp-bcn-GD16}, who
extended the notion of bundled crossings from sets of x-monotone curves
to general drawings of graphs~-- details below.

\paragraph{Notation and Definitions.}

In graph drawing, it is common to define a drawing of a graph as a function
that maps vertices to points in the plane and edges to Jordan arcs
that connect the corresponding points.  In this paper, we are less restrictive
in that we sometimes allow edges to self-intersect.  We will often identify
vertices with their points and edges with their curves.
Moreover, we assume that each pair of edges shares at most a finite
number of points, that edges can touch (that is, be tangent to) each
other only at endpoints,
\review{R3: It is unclear what you mean by touching: intersection, tangency?
I guess you mean that at every intersection between two the open
Jordan arcs the two arcs intersect transversely.}
and that any point of the plane that is not a vertex is contained in
at most two edges.
A drawing is \emph{simple} if any two edges intersect at
most once and no edge self-intersects.  We consider both simple and
non-simple drawings; look ahead at Fig.~\ref{fig:k33} for a simple and a
non-simple drawing of~$K_{3,3}$.

\begin{definition}[Bundled Crossing]\label{def:bundledcrossing}
Let $D$ be a drawing, not necessarily simple, and let $I(D)$ be the set
of intersection points among the edges
\review{R3: Are vertices included?}
(not including the vertices)
in~$D$.  We say that a
\emph{bundling} of~$D$ is a partition of~$I(D)$ into \emph{bundled
  crossings}, where a set $B \subseteq I(D)$ is a bundled crossing if
the following holds (see Fig.~\ref{fig:abc}).
\begin{itemize}[noitemsep,topsep=0pt]
\item $B$ is contained in a closed Jordan region~$R(B)$ whose boundary
  \review{R3: In real analysis, a Jordan region is a set whose boundary
    has 0 measure.  I'm not sure this is what you really mean. You
    certainly don't mean a closed region bounded by a Jordan curve,
    since you include degenerate regions in l.87.}
  consists of four Jordan arcs~$\tilde e_1$, $\tilde e_2$, $\tilde
  e_3$, and~$\tilde e_4$ that are pieces of edges~$e_1$, $e_2$, $e_3$,
  and~$e_4$ in~$D$ (a piece of an edge $e$ is $D[e]\bigcap R(B)$);
  when the edge pieces are not distinct, we define $R(B)$ not
  as a Jordan region but as an arc or a point.
\item The pieces of the edges cut out by the region~$R(B)$ can be
  partitioned into two sets $\tilde E_1$ and $\tilde E_2$ such that
  $\tilde e_1, \tilde e_3 \in \tilde E_1$, $\tilde e_2, \tilde e_4 \in
  \tilde E_2$, and each pair of edge pieces in $\tilde E_1 \times
  \tilde E_2$ has exactly one intersection point in~$R(B)$, whereas no
  two edge pieces in $\tilde E_1$ (respectively $\tilde{E_2}$) have a
  common point in $R(B)$.
\end{itemize}
\end{definition}

\begin{figure}[tb]
  \includegraphics[page=1]{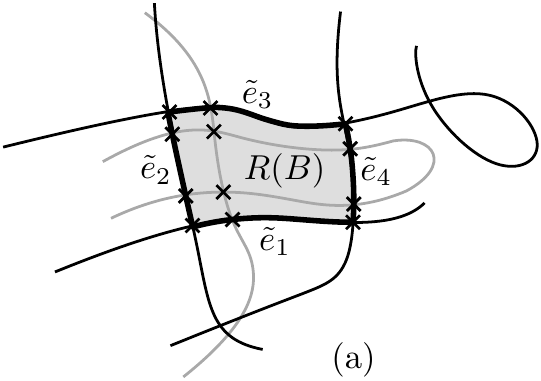} \hfil \includegraphics[page=2]{abc}
  \caption{(a)~A non-degenerate bundled crossing~$B$ and (b)~a
    degenerate bundled crossing~$B'$;
    crossings belonging to a bundled crossing are marked
    with crosses}
  \label{fig:abc}
\end{figure}
Our definition is similar to that of Alam et al.~\cite{afp-bcn-GD16}
but defines the Jordan region $R(B)$ more precisely.
We call the sets $\tilde E_1$ and $\tilde E_2$ of edge pieces
\emph{bundles} and the Jordan arcs $\tilde e_1, \tilde e_3 \in \tilde
E_1$ and $\tilde e_2, \tilde e_4 \in \tilde E_2$ \emph{frame arcs} of
the bundles $\tilde E_1$ and $\tilde E_2$, respectively.
For simple drawings, we accordingly call the edges that bound the two
bundles of a bundled crossing \emph{frame edges}.
We say that a bundled crossing is \emph{degenerate} if at least one of
the bundles consists of only one edge piece; see
Fig.~\ref{fig:abc}(b).  In this case, the region of the plane
associated with the crossing coincides with that edge piece.
In particular, any point in $I(D)$ by itself is a degenerate bundled
crossing.
Hence, every drawing admits
a trivial bundling.

We use $\bc(G)$ to denote the \emph{bundled crossing number} of a
graph~$G$, i.e., the smallest number of bundled crossings over all
bundlings of all simple drawings of~$G$.  When we do not insist on
simple drawings, we denote the corresponding number by~$\bc'(G)$.
In the circular setting, where vertices are required to lie on the
boundary of a disk and edges inside this disk, we consider the
analogous \emph{circular bundled crossing numbers} $\bco(G)$
and $\bco'(G)$ of a graph~$G$.

Fink et al.~\cite{fhsv-bceg-LATIN16} showed that it is NP-hard to
compute the minimum number of bundled crossings that a given drawing
of a graph can be partitioned into.
They also showed that this problem generalizes the problem of
partitioning a rectilinear polygon with holes into the minimum number
of rectangles, and they exploited this connection to construct a
10-approximation for computing the number of bundled crossings in the
case of a \emph{fixed circular drawing}.  They left open the
computational complexity of the general and the circular bundled
crossing number for the case that
the drawing is not fixed.

Alam et al.~\cite{afp-bcn-GD16} showed that $\bc'(G)$ equals the
orientable genus of~$G$, which in general is NP-hard to
compute~\cite{Thomassen_1989}.  They also showed that there is a graph~$G$
with~$\bc'(G) \ne \bc(G)$ by proving that $\bc'(K_6)=1<\bc(K_6)$.
As it turns out, the two problem variants differ in the circular
setting, too (see Fig.~\ref{fig:k33} and Observation~\ref{obs:k33}).
For computing $\bc(G)$ and $\bco(G)$, Alam et al.~\cite{afp-bcn-GD16}
gave an algorithm whose approximation factor depends on the density of
the graph.
They posed the existence of
an FPT algorithm for $\bco(G)$ as an open question.

\paragraph{Our Contribution.}

\review{R3: This sounds intriguing. It would be good to tell right
  here how they are different.  Do you simply mean that they are not
  equal?}
As some graphs~$G$ have $\bc'(G) \ne \bc(G)$ (see
Fig.~\ref{fig:k33}), %
Fink et al.~\cite{fhsv-bceg-LATIN16} posed the complexity of computing
the bundled crossing number $\bc(G)$ of a given graph~$G$ as an open
problem.  We settle this %
in Section~\ref{sec:bcnp-hard} as
follows:
\begin{theorem}
  \label{thm:bchard}
  Given a graph $G$, it is NP-hard to compute $\bc(G)$.
\end{theorem}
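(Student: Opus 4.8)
The plan is to reduce from computing the orientable genus $\g(G)$, which is NP-hard~\cite{Thomassen_1989}, using the bridge established by Alam et al.~\cite{afp-bcn-GD16} that $\bc'(G)=\g(G)$. Since every simple drawing is in particular a drawing, we always have $\bc(G)\ge\bc'(G)=\g(G)$; the whole difficulty is that this inequality can be strict (indeed $\bc(K_6)=2>1=\bc'(K_6)$), so a verbatim reduction fails. The idea is therefore to transform the input $G$ into a graph $G'$ in which the simplicity restriction no longer costs anything, i.e.\ $\bc(G')=\bc'(G')=\g(G')$, while $\g(G')$ still encodes $\g(G)$.

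Concretely, I would build $G'$ by \emph{thickening}: replace every edge $uv$ of $G$ by $t$ internally disjoint $u$--$v$ paths (subdivided so that $G'$ stays simple), for a suitable $t$ bounded by a polynomial in $|G|$. Two facts are then easy. First, drawing the $t$ strands of each thick edge tightly alongside the original edge shows $\g(G')\le\g(G)$, and contracting the strands back recovers $G$ as a (topological) minor, so by minor-monotonicity of the genus $\g(G')\ge\g(G)$; hence $\g(G')=\g(G)$. Second, $\bc(G')\ge\bc'(G')=\g(G')=\g(G)$. It therefore remains to prove the matching upper bound $\bc(G')\le\g(G)$ by exhibiting a \emph{simple} bundled drawing of $G'$ with only $\g(G)$ bundled crossings.

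The main obstacle is exactly this upper bound. Starting from a minimum-genus embedding of $G'$ on $\Sigma_{\g(G)}$, I would invoke the correspondence of Alam et al.\ between handles and bundled crossings: represent the surface as a planar region carrying $\g(G)$ handles and cut each handle open, so that each cut becomes a place where the two bundles of strands routed through that handle cross, i.e.\ a single bundled crossing, giving $\g(G)$ of them in total. The resulting drawing is, however, generally non-simple, since a strand may wind around a handle so as to cross a partner strand twice, or cross itself. Here the thickness is used: because each original edge now carries $t$ parallel strands, there is enough room at each handle to reorder and reroute the strands so that every pair of opposing strands meets exactly once and no strand self-intersects, turning each handle into a clean, simple bundled crossing. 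Making this local untangling globally consistent~-- so that no pair of strands acquires a second crossing elsewhere in the drawing~-- is the delicate part, and choosing $t$ large enough to guarantee the necessary routing capacity is what drives the reduction.

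Finally, since the construction is polynomial and yields $\bc(G')=\g(G)$, any algorithm computing $\bc$ would compute $\g$, which establishes NP-hardness.
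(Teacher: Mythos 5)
Your high-level skeleton matches the paper's: reduce from orientable genus via $\bc'(G)=\g(G)$, and modify $G$ into a graph $G'$ for which the simplicity requirement is free, i.e.\ $\bc(G')=\g(G')=\g(G)$. But the modification you choose~-- thickening each edge into $t$ parallel paths~-- does not do the job, and the step you yourself flag as ``the delicate part'' is exactly where the argument fails. The obstruction to simplicity in the handle/fundamental-polygon picture is homotopical, not a matter of capacity: if a single edge traverses the cut system (the boundary of the fundamental polygon) several times, its planar representative must self-intersect or cross a partner edge more than once, and every one of your $t$ parallel strands follows the same route through the same handles, so each strand inherits the identical defect. Reordering strands within a bundle at a handle cannot remove a self-intersection of one strand or a double crossing between two strands from opposite bundles; no choice of $t$ creates the needed ``room.''

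The idea that is missing is \emph{subdivision at the boundary crossings}: place a new vertex wherever an edge meets the boundary of the fundamental polygon, so that each resulting sub-edge wraps through at most one side-identification; then no edge self-intersects and no two edges cross twice, the drawing becomes simple, and the subdivisions preserve both the bundled crossings and the genus. To make this a polynomial-time reduction one also needs a polynomial bound on how many such crossings occur, which the paper gets from Lazarus et al.~\cite{lpvv-ccpso-SoCG01} ($O(\g(G)\,|E(G)|)$ boundary intersections, hence $O(|E(G)|^2)$ subdivisions per edge suffice since $\g(G)\le|E(G)|$); your proposal has no analogue of this bound and leaves $t$ unspecified. Your lower-bound direction ($\g(G')=\g(G)$ by minor-monotonicity, hence $\bc(G')\ge\bc'(G')=\g(G)$) is fine and mirrors the paper, but the matching upper bound needs the subdivision construction, not edge multiplicity.
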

Our main result, which we prove in Section~\ref{sec:main}, resolves an
open question of Alam et al.~\cite{afp-bcn-GD16} concerning the
fixed-parameter tractability of bundled crossing minimization in
circular layouts as follows:
\review{R3:(and earlier): "circular layout" It is unclear why you
  introduce new name for a 1-page drawing.  AW: I would state
  somewhere early on that ``circular'' is the same as ``1-page''.}

\newcommand{\thmbcofpt}{%
\label{thm:bco-FPT} \sloppy
    There is a computable function $f$ such that, for any $n$-vertex
    graph~$G$ and integer~$k$, we can check, in $O(f(k) n)$ time,
    whether $\bco(G) \le k$, i.e., whether~$G$ admits a circular
    layout with $k$ bundled crossings.
    Within the same time bound, such a layout can be computed.}
\begin{theorem}
  \thmbcofpt
\end{theorem}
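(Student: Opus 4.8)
The plan is to route everything through the genus connection emphasized above. As a warm-up and conceptual backbone, I would first settle the non-simple circular number: I claim $\bco'(G)$ equals the \emph{outer genus} $\gamma_o(G)$, the least $g$ for which $G$ embeds in the orientable surface $\Sigma_{g,1}$ of genus $g$ with one boundary circle so that every vertex lies on that circle. The forward direction mimics Alam et al.~\cite{afp-bcn-GD16}: in a circular layout with $k$ bundled crossings, resolve each bundled crossing by rerouting one of its two bundles through a fresh handle, turning the disk into $\Sigma_{k,1}$ and removing all crossings while keeping the vertices on the boundary. Conversely, projecting each handle back to the disk reintroduces exactly one complete bipartite crossing pattern, i.e., one bundled crossing. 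I would then eliminate the boundary via an apex: adding a vertex $v$ adjacent to all of $V(G)$ forces ``all vertices on one face,'' and a face-merging argument (delete $v$ from a minimum-genus embedding of $G+v$; the faces around $v$ coalesce into a single disk face carrying every vertex) yields $\gamma_o(G)=\mathrm{genus}(G+v)$. This already makes $\bco'$ fixed-parameter tractable, since testing $\mathrm{genus}(H)\le k$ is doable in $O(f(k)\,n)$ time by Mohar's linear-time fixed-surface embedding algorithm.

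The actual theorem concerns the \emph{simple} number $\bco(G)$, and here $\bco(G)\neq\bco'(G)$ in general (Fig.~\ref{fig:k33}, Observation~\ref{obs:k33}), so the clean genus identity only bounds $\bco$ from below. The heart of the proof is therefore a refined, \emph{combinatorial} characterization: $\bco(G)\le k$ iff $G$ admits a rotation system together with a handle/bundling assignment realizing an outer embedding into $\Sigma_{g,1}$ with $g\le k$ that is moreover \emph{simple}, i.e., in which no two edges are forced to cross more than once once the handles are projected back to the disk. I would encode this as a cyclic vertex order (the rotation at the apex $v$) plus, for each of the $\le k$ bundled crossings, the two non-crossing edge-bundles it separates, subject to the interleaving condition that in a simple circular drawing two edges cross exactly when their endpoints alternate on the circle.

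For the algorithm I rely on a structural bound that tames the search: if $\bco(G)\le k$ then $\mathrm{tw}(G)=O(k)$. Intuitively, deleting the $\le k$ bundled-crossing regions from an optimal layout leaves a crossing-free (hence outerplanar, treewidth-$2$) drawing, and each bundled crossing is an ``orthogonal'' gadget whose two bundles attach to disjoint boundary arcs and therefore contribute only $O(1)$ to the width of a suitable tree decomposition. A first Euler-formula computation on $\Sigma_{k,1}$ with all vertices on the outer face also gives $|E(G)|\le 2|V(G)|+6k$, so $G$ is sparse. Given this, the algorithm is: run Bodlaender's algorithm to either certify $\mathrm{tw}(G)=O(k)$ with a decomposition or reject; then decide the refined characterization above by a leaf-to-root dynamic program over the tree decomposition whose states record the partial cyclic order, the open edge-bundles, and the partial bundling (at most $f(k)$ states per bag, since the width is $O(k)$), running in $O(f(k)\,n)$ time. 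Equivalently, since the refined condition is expressible in $\mathrm{MSO}_2$, Courcelle's theorem gives the same bound. Tracking a witness in the dynamic program (or by standard self-reduction) outputs the layout within the same time bound.

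The main obstacle is the simple case: proving that the refined combinatorial condition is \emph{exactly} $\bco(G)\le k$~-- in particular that an abstract simple outer embedding can always be realized as a genuine simple circular drawing with the same number of bundled crossings, and vice versa~-- while simultaneously keeping it $\mathrm{MSO}_2$-definable and amenable to a bounded-state dynamic program. Closely tied to this is the treewidth bound $\mathrm{tw}(G)=O(k)$ for simple circular layouts, whose rigorous proof (controlling how the $k$ bundled crossings can share frame edges and interleave) is the technical crux; should it fail to be tight, the fallback is to enforce the outer-face (apex) and simplicity constraints directly inside Mohar's fixed-genus embedding procedure, where the delicate part becomes forbidding a second crossing between any pair of edges across all resolved handles.
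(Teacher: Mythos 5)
Your high-level architecture matches the paper's: lift the drawing to a surface via handles, bound the treewidth by $O(k)$, and decide the remaining combinatorial condition with Courcelle's theorem (the warm-up identity $\bco'(G)=\g(G^\star)$ for an apex $v^\star$ is exactly the paper's Lemma~\ref{lem:bcop->genus}). However, you explicitly flag the two load-bearing steps as unresolved ``obstacles,'' and these are precisely where the paper's actual content lies, so the proposal has genuine gaps rather than an alternative proof.

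First, your treewidth bound rests on the claim that deleting the frame vertices leaves outerplanar pieces, which you justify only ``intuitively.'' The difficulty is that a component of the cut surface could a priori pass both over a handle and through its undertunnel, or surround a hole, in which case it is not a topological disk and its subgraph need not be outerplanar. The paper closes this with Lemma~\ref{lem:region-planar}, whose proof projects such a component back to the disk and exhibits one of the matching configurations $D_A(p)$ or $D_B(p)$ of Lemma~\ref{lem:conf}; these are obstructions to extension to a pseudoline arrangement (Arroyo et al.), hence cannot occur in a simple circular drawing. Without this argument (or a substitute), the bound $\mathrm{tw}(G)\le 8k+2$ is unsupported. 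Second, your ``refined combinatorial characterization'' via rotation systems and interleaving conditions is not shown to be $\mathrm{MSO}_2$-definable, and it is not clear how a bounded-state dynamic program would verify that an abstract assignment is realizable as a genuine simple drawing. The paper instead enumerates, outside the logic, all $f(k)$ simple drawings of the at most $4k$ frame edges together with their bundlings; for each fixed frame drawing the regions are determined, and ``the subgraph $G_r$ fits into region $r$'' is encoded as planarity of an augmented graph $G^*_r$ (hub and boundary vertices), which is expressible via the forbidden minors $K_5$ and $K_{3,3}$ through an MSO-transduction. Both of these ideas are missing from your proposal, and your own closing paragraph concedes as much.
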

To prove this, we use an approach similar to that of Bannister and
Eppstein~\cite{Bannister_2014} for 1-page crossing minimization
(that is, edge crossing minimization in a circular layout).
Bannister and Eppstein observe that the set of crossing edges of a
circular layout with $k$ edge crossings of a graph~$G$ forms
an arrangement of curves that partition the drawing into $O(k)$
subgraphs, each of which occurs in a distinct face of this arrangement.
The subgraphs are obviously outerplanar.  This means that $G$ has
bounded treewidth (see \lncsarxiv{the full
  version~\cite{full-version-arxiv}}{Appendix~\ref{sec:mso2+courcelle}}).
So, by enumerating all ways to draw the
crossing edges of a circular layout with $k$ edge crossings, and,
for each such way, expressing the edge
partition problem (into crossing edges and
outerplanar components) in extended monadic
second order logic (MSO$_2$),
\emph{Courcelle's Theorem}~\cite{courcelle1990}
(stated as Theorem~\ref{thm:courcelle}
in Section~\ref{sec:main}) can be applied (leading to fixed-parameter
tractability).

The difficulty in using this approach for bundled crossing minimization
is in showing how to partition the graph into a set of $O(k)$
``crossing edges'' (our analogy will be the frame edges)
and a collection of $O(k)$ outerplanar graphs.
This is where we exploit the connection to genus.
Moreover, constructing an MSO$_2$ formula is somewhat more
difficult in our case due to the more complex way our regions interact
with our special set of edges.

\section{Computing $\bc(G)$ is NP-Hard}
\label{sec:bcnp-hard}

For a given graph $G$, finding a drawing
with the fewest bundled crossings
resembles
computing
the \emph{orientable genus}\footnote{I.e., computing the fewest
  \emph{handles} to attach to the sphere so that $G$ can be drawn on
  the resulting surface without any crossings.}
$\g(G)$ of $G$.  In fact, Alam et al.~\cite{afp-bcn-GD16} showed that
$\bc'(G) = \g(G)$.
Thus, deciding %
$\bc'(G)=k$ for some $k$ is
NP-hard and that it is FPT in $k$, since the same
holds for deciding %
$\g(G)=k$~\cite{Thomassen_1989, mohar1999linear, KawarabayashiMR08}.

\begin{theorem}[\!\!\cite{afp-bcn-GD16}]
  \label{obs:bc'np-hard}
 For every graph $G$ with genus $k$, it holds that $\bc'(G) = k$.
\end{theorem}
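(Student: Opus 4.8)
The plan is to prove the two inequalities $\g(G)\le\bc'(G)$ and $\bc'(G)\le\g(G)$ separately, via a local correspondence in which a single handle of an embedding plays the role of a single bundled crossing. Throughout I regard planar drawings as drawings on the sphere $S_0$, and I use that attaching an orientable handle to a surface raises its genus by exactly one.

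For $\g(G)\le\bc'(G)$, start from any (not necessarily simple) drawing $D$ of $G$ on the sphere together with a bundling into $b=\bc'(G)$ bundled crossings. I remove the crossings of one bundled crossing $B$ at a time. Recall that inside $R(B)$ the bundle $\tilde E_1$ is crossing-free, $\tilde E_2$ is crossing-free, and every pair in $\tilde E_1\times\tilde E_2$ meets exactly once. Attach a handle over $R(B)$ with its two feet on the frame arcs of $\tilde E_1$, and lift the whole bundle $\tilde E_1$ onto the handle (an ``overpass''), leaving $\tilde E_2$ in the plane. Since the lifted bundle now runs over the handle while $\tilde E_2$ stays below, all $|\tilde E_1|\cdot|\tilde E_2|$ crossings of $B$ vanish and no new crossing is created. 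Performing this for each of the $b$ bundled crossings (in disjoint neighbourhoods of the regions, after perturbing the drawing so that the regions are pairwise disjoint) yields a crossing-free embedding of $G$ on the sphere with $b$ handles, i.e.\ on an orientable surface of genus at most $b$. Hence $\g(G)\le b=\bc'(G)$.

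For the reverse inequality $\bc'(G)\le\g(G)$, I invert this operation. Fix a minimum-genus embedding of $G$ on $S_k$, viewed as the sphere with $k$ disjoint handles, where handle $i$ has feet $D_i,D_i'$ and carries a family $A_i$ of pairwise disjoint arcs running from $D_i$ to $D_i'$. Choose a simple arc $\alpha_i$ on the sphere from $D_i$ to $D_i'$, crossing the rest of the drawing transversally, with the $\alpha_i$ pairwise disjoint. Now delete handle $i$ and redraw the bundle $A_i$ inside a thin strip following $\alpha_i$, so that every arc of $A_i$ runs parallel to $\alpha_i$. Let $B_i$ be the set of edges crossed by $\alpha_i$. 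Then all crossings created by this flattening lie in the strip, each edge of $A_i$ crosses each edge of $B_i$ exactly once (because they all shadow $\alpha_i$), the edges of $B_i$ cross the thin strip as near-parallel transversals and hence do not meet one another inside it, and the quadrilateral cut out by the two outermost arcs of $A_i$ and the two outermost edges of $B_i$ serves as the region $R_i$ with its four frame arcs. Thus handle $i$ becomes a single bundled crossing, and we obtain a planar drawing of $G$ with at most $k$ bundled crossings, giving $\bc'(G)\le k=\g(G)$.

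I expect the reverse direction to be the main obstacle, and specifically the requirement that the flattened picture be a genuine bundled crossing with a complete-bipartite crossing pattern. The delicate point is to guarantee that each edge crosses the transversal arc $\alpha_i$ \emph{exactly once}: a wiggly edge could meet $\alpha_i$ several times, which would make some $A_i$--$B_i$ pair cross more than once and violate Definition~\ref{def:bundledcrossing}. Handling this needs an isotopy argument that pulls $\alpha_i$ taut and locally reroutes edges to cancel superfluous intersections (a Reidemeister-II--type move in the plane), together with the bookkeeping that keeps the $k$ strips, and hence the $k$ bundled crossings, disjoint and non-interfering.
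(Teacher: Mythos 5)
Your first direction ($\g(G)\le\bc'(G)$) is essentially the paper's Observation~\ref{obs:lift}: one handle per bundled crossing, with one bundle lifted over it. That part matches the source (modulo the glossed-over point that the regions $R(B)$ of distinct bundled crossings need not be disjoint a priori, which the paper also does not belabour).

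The reverse direction is where you diverge from the source and where there is a genuine gap. The paper, following Alam et al.~\cite{afp-bcn-GD16}, does not flatten handles one at a time; it uses the \emph{fundamental polygon} (polygonal schema): the genus-$g$ surface is cut into a single $4g$-gon with boundary word $a_1b_1a_1'b_1'\cdots a_gb_ga_g'b_g'$, every edge is chopped into pieces by the polygon boundary, and reconnecting the pieces across each identified pair of sides produces, for each handle, an $a_j$-bundle and a $b_j$-bundle that cross in exactly one bundled crossing. This sidesteps both weak points of your construction. First, you assume without justification that a minimum-genus embedding can be isotoped so that each handle carries exactly a parallel family $A_i$ of essential arcs; this needs an argument (push vertices and inessential arcs off the tubes, then use that disjoint essential arcs in an annulus are parallel), and you must also handle edges that traverse several handles or the same handle several times. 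Second --- and you flag this yourself as the main obstacle --- you need each edge to cross $\alpha_i$ at most once, and your proposed fix by bigon removal does not obviously go through: a bigon between $\alpha_i$ and an edge may contain vertices or the feet of other handles, in which case $\alpha_i$ cannot be pushed across it while staying disjoint from those feet and from the other $\alpha_j$'s. Since you neither resolve this nor argue that several strands of one edge inside $R(B_i)$ are admissible under Definition~\ref{def:bundledcrossing}, the inequality $\bc'(G)\le\g(G)$ is not established as written. (Note also that any such flattening generally yields a non-simple drawing --- self-intersections and repeated crossings arise from edges traversing several handles --- which is precisely why this argument bounds $\bc'$ rather than $\bc$; cf.\ the discussion preceding Theorem~\ref{thm:bchard}.)
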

To show this, Alam et al.~\cite{afp-bcn-GD16} first showed
that a drawing with $k$ bundled crossings can be lifted onto a surface
of genus $k$, and thus $\bc'(G) \ge \g(G)$:
\begin{observation}[$\!\!$\cite{afp-bcn-GD16}]
  \label{obs:lift}
  A drawing $D$ with $k$ bundled crossings can be \emph{lifted} onto a
  surface of genus $k$ via a one-to-one correspondence between bundled
  crossings and handles, i.e., at each bundled crossing, we attach a handle
  for one of the two edge bundles, thus providing a crossing-free
  \emph{lifted drawing}; see Fig.~\ref{fig:dwarf-village}.
\end{observation}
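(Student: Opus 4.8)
The plan is to build the lifted drawing one bundled crossing at a time. I would first regard $D$ as a drawing on the sphere (the plane together with a point at infinity), a genus-$0$ surface that still carries all crossings of $D$. Because the bundling partitions $I(D)$, every crossing of $D$ lies in the region $R(B)$ of exactly one bundled crossing $B$. Since the crossings of distinct bundled crossings form disjoint finite point sets, I may shrink each region to a small topological disk around its own crossing cluster, so that the regions become pairwise disjoint and inside each of them the only edge pieces present are those of its two bundles $\tilde E_1$ and $\tilde E_2$. It then suffices to show that the crossings inside a single region can be removed by attaching a single handle, since distinct regions can be treated independently.

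The core step is this one-handle claim. By Definition~\ref{def:bundledcrossing}, inside $R(B)$ the two bundles cross in a complete grid pattern: every arc of $\tilde E_1$ meets every arc of $\tilde E_2$ exactly once, while arcs within the same bundle are pairwise disjoint. Viewing $\tilde E_1$ as a single ribbon of parallel arcs that enters $R(B)$ on one side and leaves on the opposite side, I would attach an orientable handle as a bridge over $R(B)$: remove a small open disk just before the ribbon enters the crossing zone and a second one just after it leaves, and glue a tube joining the two resulting boundary circles over the region. Rerouting the whole ribbon $\tilde E_1$ through this tube keeps its arcs parallel, so they still do not meet one another, and it now passes over $\tilde E_2$ instead of across it, so no arc of $\tilde E_1$ meets any arc of $\tilde E_2$. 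Thus a single handle removes all $|\tilde E_1|\cdot|\tilde E_2|$ crossings of $B$ at once. The same routing handles a degenerate bundled crossing, where one bundle is a single arc (or the two bundles meet in a single point): lifting that one arc over the handle still eliminates every crossing.

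Performing this construction in each of the $k$ pairwise-disjoint regions removes every crossing of $D$, because each crossing belongs to exactly one region, and it introduces no new crossings since the handles sit over disjoint disks. The outcome is a crossing-free drawing on the sphere with $k$ orientable handles attached, which is a surface of orientable genus $k$, and by construction each handle corresponds to exactly one bundled crossing, as claimed.

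I expect the main obstacle to be the core one-handle step: verifying that rerouting a whole bundle through the tube is well defined and genuinely removes the entire grid of crossings without creating new ones. The remaining bookkeeping~-- separating the regions and checking that gluing $k$ tubes onto the sphere yields orientable genus exactly $k$~-- is routine by comparison.
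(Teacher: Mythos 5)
Your proposal is correct and follows exactly the construction that the paper (citing Alam et al.) sketches inside the observation itself: attach one handle per bundled crossing and reroute one of its two bundles over that handle, which eliminates the full grid of crossings of that bundled crossing without affecting the others. The paper gives no further proof beyond this sketch, so your write-up is simply a more detailed version of the same argument.
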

Then, to see that
$\bc'(G) \le \g(G)$, Alam et al.~\cite{afp-bcn-GD16} used the \emph{fundamental polygon}
representation (or \emph{polygonal schema})
\cite{d-ctgs-HDCG17} of a drawing on a genus-$g$ surface.
More precisely, the sides of the polygon are
numbered  in circular order $a_1,b_1,a_1',b_1', \ldots,
a_g,b_g,a_g',b_g'$; for $1 \le k \le g$, the pairs $(a_k,a_k')$ and
$(b_k,b_k')$ of sides are identified in opposite direction, meaning
that an edge leaving side $a_k$ appears on the corresponding
position of side $a_k'$; see Fig.~\ref{fig:k6-tourus}
for an example
showing $K_6$ drawn in a fundamental square, which models a drawing on
the torus. In such a representation, all vertices lie in the interior
of the fundamental polygon and
all edges leave the polygon avoiding vertices of the polygon.
Alam et al.~\cite{afp-bcn-GD16} showed that such a
representation can be transformed into a non-simple bundled drawing with
$g$ many bundled crossings.
It is not clear, however, when such a representation can be transformed into
a simple bundled drawing with $g$ bundled crossings, as this transformation
can produce drawings with self-loops and pairs of edges
crossing multiple times, e.g., Alam et al.~\cite[Lemma 1]{afp-bcn-GD16}
showed that $\bc(K_6) = 2$ while $\bc'(K_6) = \g(K_6) = 1$.

We show that the problem remains NP-hard for simple drawings.
\begin{proof} [of Theorem~\ref{thm:bchard}]
    Let $G'$ be the graph obtained from $G$ by subdividing each edge
    $O(|E(G)|^2)$ times.  We reduce from the NP-hardness of computing
    the genus $\g(G)$ of~$G$ by showing that $\bc(G') = \g(G)$, with
    Observation~\ref{obs:lift} in mind.

    Consider the embedding of $G$ onto the genus-$\g(G)$ surface.  By
    a result of Lazarus et al.~\cite[Theorem~1]{lpvv-ccpso-SoCG01}, we
    can construct a fundamental polygon representation
    of the embedding so that its boundary
    intersects with edges of the graph $O(\g(G)|E(G)|)$ times.  Note
    that each edge piece outside the polygon intersects each other
    edge piece at most once; see Fig.~\ref{fig:k6-tourus}.  We
    then subdivide the edges by adding a vertex to each intersection
    \review{R3: Do you subdivide the edge (as suggested by the next sentence)?}
    of an edge with the boundary
    of the fundamental polygon.  This subdividing of edges ensures that no
    edge intersects itself or intersects another edge more than once
    in the corresponding drawing of the graph on the plane; hence, the
    drawing is simple.  Since $\g(G) \le |E(G)|$, by
    subdividing edges further whenever necessary, we obtain a drawing
    of~$G'$.  Our
    subdivisions keep the integrity of all bundled crossings, so
    $\bc(G') \le \g(G)$.  On the other hand, since subdividing edges
    does not affect the genus, $\g(G) = \g(G') = \bc'(G') \le
    \bc(G')$.
\end{proof}

\section{FPT Algorithms for Computing $\bco'(G)$ and $\bco(G)$}
\label{sec:main}

We now consider circular layouts, where vertices are placed on a
circle and edges are routed inside the circle. We
note that~$\bco(G)$ and~$\bco'(G)$ can be different.
\begin{observation}
\label{obs:k33}
  $\bco'(K_{3,3})=1$ but $\bco(K_{3,3}) > 1$.
\end{observation}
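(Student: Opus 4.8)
The plan is to prove the two parts of Observation~\ref{obs:k33} separately.

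For the upper bound $\bco'(K_{3,3}) = 1$, I would exhibit an explicit circular (1-page) drawing of $K_{3,3}$, possibly non-simple, whose crossings can be bundled into a single bundled crossing. The natural starting point is to place the six vertices on the circle and group the edges into two bundles that cross each other exactly once in the sense of Definition~\ref{def:bundledcrossing}. A convenient vertex order is to alternate the two color classes, say $u_1, v_1, u_2, v_2, u_3, v_3$ around the circle, and then route the edges so that all the ``long'' chords form one bundle and cross a second bundle transversally. Since we are allowed a non-simple drawing, I can let edges share the crossing region and pass through it in a grid-like pattern so that every edge of $\tilde E_1$ meets every edge of $\tilde E_2$ exactly once and no two edges within the same bundle meet inside the region---exactly the structure required for a single bundled crossing. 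I would then verify against the definition that the chosen region $R(B)$ is bounded by four frame arcs and that the partition into $\tilde E_1, \tilde E_2$ satisfies the incidence conditions. The existence of such a drawing shows $\bco'(K_{3,3}) \le 1$, and since $K_{3,3}$ is nonplanar it cannot be drawn in a circle without crossings, so $\bco'(K_{3,3}) = 1$.

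For the lower bound $\bco(K_{3,3}) > 1$, I must show that no \emph{simple} circular drawing of $K_{3,3}$ admits a bundling into a single bundled crossing. The approach is a counting/case argument. Suppose for contradiction that a simple 1-page drawing has a single bundled crossing with bundles $\tilde E_1, \tilde E_2$ and frame edges $e_1, e_3 \in \tilde E_1$, $e_2, e_4 \in \tilde E_2$. In a simple drawing each pair of edges crosses at most once, and all crossings must lie inside $R(B)$; hence every crossing in the whole drawing is between an edge of $\tilde E_1$ and an edge of $\tilde E_2$, and edges within a single bundle are pairwise noncrossing, i.e., each bundle is a set of chords that is \emph{planar} as a circular (outerplanar) subdrawing. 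So the edge set of $K_{3,3}$ splits into two noncrossing families of chords. I would then argue that this is impossible by exploiting the parity/interleaving structure of chords on a circle: two chords cross in a 1-page layout exactly when their endpoints interleave, so a bundle being crossing-free means its chords are pairwise nested or disjoint. Counting how many edges of $K_{3,3}$ can be packed into two such laminar families, for any of the (finitely many, up to symmetry) cyclic orders of the six vertices, should fall short of nine, yielding the contradiction.

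The main obstacle will be the lower bound, specifically making the case analysis over cyclic vertex orders both rigorous and concise. The cleanest route is probably to fix the combinatorial abstraction first: a simple 1-page drawing is completely determined by the cyclic order of the vertices together with the chord set, and whether two chords cross depends only on endpoint interleaving. Given that, ``$\bco(K_{3,3}) = 1$'' would force a partition of the nine edges into two outerplanar (noncrossing-chord) subgraphs on a common circular vertex order, which is a purely combinatorial condition I can test. I expect that a parity or counting invariant---for instance, bounding the number of pairwise-noncrossing chords realizable on six cyclically ordered points, or using the fact that each bundle induces an outerplanar graph and outerplanar graphs on six vertices have at most $2\cdot 6 - 3 = 9$ edges but with further structural constraints forced by the bipartite incidence---rules out the split. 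The delicate point is ensuring the frame-edge and incidence conditions of Definition~\ref{def:bundledcrossing} are respected rather than merely the crossing condition, since a single bundled crossing is more restrictive than just ``two crossing-free families.'' Verifying that these extra conditions cannot be met (or are already implied) is where care is needed.
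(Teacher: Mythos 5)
Your upper-bound argument is essentially the paper's: place the six vertices in alternating order around the circle and bundle the crossings into one bundled crossing, then invoke nonplanarity of $K_{3,3}$ to rule out zero crossings. That part is fine.

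For the lower bound there is a genuine gap. Your reduction is sound: a simple circular drawing with a single bundled crossing does force a partition of $E(K_{3,3})$ into two families of pairwise-noncrossing chords on a common cyclic vertex order (edges crossing nothing can be placed in either family), and the extra frame/incidence conditions of Definition~\ref{def:bundledcrossing} only make matters harder, so it suffices to refute this weaker configuration. But the step that actually refutes it is never supplied. The counting invariant you float does not work: each noncrossing chord family on six cyclically ordered points is an outerplanar graph with at most $2\cdot 6-3=9$ edges, so two families accommodate up to $18\ge 9$ edges, and no contradiction follows from counting alone. The ``case analysis over cyclic orders'' is left as a promise. The paper closes this in one line with exactly the observation you are missing: if $\bco(G)=1$, then $G$ is planar, because the edges of one bundle are pairwise noncrossing and can be rerouted outside the circle, which removes every crossing (equivalently, your two noncrossing families constitute a 2-page book embedding, and 2-page embeddable graphs are subhamiltonian, hence planar). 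Since $K_{3,3}$ is nonplanar, $\bco(K_{3,3})>1$. Replace your case analysis with this planarity argument and the proof is complete; as written, the decisive step is absent.
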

\begin{proof}
  Let $V(K_{3,3}) = \{a,b,c\} \cup \{a',b',c'\}$. A drawing with
  $\bco'(K_{3,3})=1$ is obtained by placing the vertices
  $a,a',b,b',c,c'$ in clockwise order around a circle; see
  Fig.~\ref{fig:k33}(b).  If a graph~$G$ has $\bco(G) = 1 $ then $G$ is
  planar because we can embed edges for one bundle outside the circle.
  Hence, $\bco(K_{3,3})>1$.
\end{proof}
\begin{figure}[tb]
  \begin{minipage}[b]{0.38\textwidth}
    \centering
    \includegraphics[page=2]{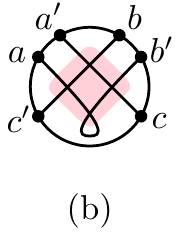}\hfil\includegraphics[page=1]{k33}
    \caption{$\bco(K_{3,3}) \ne \bco'(K_{3,3})$; see
      Observation~\ref{obs:k33}}
    \label{fig:k33}
  \end{minipage}
  \hfill
  \begin{minipage}[b]{0.5\textwidth}
    \centering
    \includegraphics{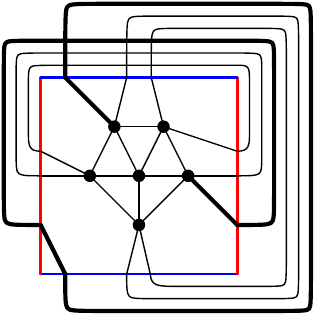}
    \caption{$K_6$ drawn in a fundamental square; the
      self-intersecting edge is bold~\cite[Fig.~2]{afp-bcn-GD16}.}
    \label{fig:k6-tourus}
  \end{minipage}
\end{figure}

Similarly to computing $\bc'(G)$, we use compute $\bco'(G)$ via computing genus.

\begin{theorem}
  \label{thm:bcop-if-fpt}
  Testing whether $\bco'(G)=k$ can be done in $2^{k^{O(1)}}n$ time.
\end{theorem}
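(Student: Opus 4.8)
The plan is to reduce the computation of $\bco'(G)$ to an ordinary genus computation on an augmented graph, in direct analogy with the identity $\bc'(G)=\g(G)$ of Theorem~\ref{obs:bc'np-hard}. Let $G+K_1$ denote the graph obtained from $G$ by adding one \emph{apex} vertex~$u$ adjacent to every vertex of~$G$. The key claim I would establish is
\[
  \bco'(G)\le k \iff \g(G+K_1)\le k .
\]
The intuition is that the apex~$u$ records, through its rotation, the cyclic order in which the vertices of~$G$ appear on the circle, while a small disk around~$u$ plays the role of the exterior of the bounding circle. As a sanity check, for $k=0$ this is exactly the classical fact that $G$ is outerplanar if and only if $G+K_1$ is planar.

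For the forward direction I would start from an optimal circular bundled drawing of~$G$ with $\bco'(G)$ bundled crossings inside a disk~$\Delta$. Applying the lifting of Observation~\ref{obs:lift} to each bundled crossing attaches one handle inside~$\Delta$ and removes all crossings, turning~$\Delta$ into a genus-$\bco'(G)$ surface with a single boundary circle on which all vertices lie; capping this boundary with a disk yields a closed surface of the same genus in which the capping disk is a face incident to all vertices in their circular order. Placing~$u$ in that face and joining it to every vertex stays crossing-free, so $\g(G+K_1)\le\bco'(G)$. For the converse I would take an embedding of $G+K_1$ on the genus-$k$ surface, cut out a small disk around~$u$ (so its boundary meets each edge~$uv$ once, in the rotation order of~$u$), delete~$u$, and slide each vertex~$v$ along the stub of~$uv$ onto the resulting boundary circle. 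This produces a crossing-free drawing of~$G$ on a disk with $k$ handles in which every vertex lies exactly once on the boundary circle. Reversing the handle-to-crossing construction of Alam et al.~\cite{afp-bcn-GD16} (the one used for the inequality $\bc'(G)\le\g(G)$) \emph{inside} the disk, while keeping the boundary circle as the exterior, converts the $k$ handles back into at most $k$ bundled crossings, so $\bco'(G)\le\g(G+K_1)$.

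Given the equivalence, testing $\bco'(G)=k$ amounts to deciding $\g(G+K_1)\le k$ and $\g(G+K_1)\not\le k-1$. Since $G+K_1$ has $n+1$ vertices, I would feed it to the fixed-parameter algorithm for orientable genus~\cite{mohar1999linear,KawarabayashiMR08}, which decides embeddability into the genus-$k$ surface in $2^{k^{O(1)}}n$ time, giving the claimed bound; the precise constant in the exponent $k^{O(1)}$ is inherited from whichever genus algorithm is invoked.

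I expect the backward direction to be the main obstacle: one must verify that sliding the vertices onto the boundary circle keeps the drawing crossing-free and places each vertex exactly once on the circle (this is where the simplicity of the apex's rotation is used), and that the reverse handle-unlifting can be carried out without pushing any edge across the boundary circle, so that the resulting bundled drawing is genuinely circular. Everything else is a routine translation of the planar/genus correspondence of Alam et al.\ into the bordered-surface setting.
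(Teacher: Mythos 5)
Your proposal matches the paper's proof: the paper defines the same apex graph $G^\star = G + K_1$, proves $\bco'(G) = \g(G^\star)$ by exactly your two directions (lifting via Observation~\ref{obs:lift} and capping for one inequality; contracting the neighbors of the apex into a small disk around it and then unfolding the handles via the fundamental-polygon representation of Alam et al.\ for the other), and then invokes the $2^{g^{O(1)}}n$ genus algorithm of Kawarabayashi et al. The step you flag as the main obstacle is handled in the paper by dragging each neighbor $u$ of the apex along the edge $uv^\star$ so that all remaining edges at $u$ travel as a bundle, which is the same sliding argument you sketch.
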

\begin{proof}[Sketch]
This follows from the fact that
$\bco'(G) = \g(G^\star)$ where $G^\star$ is a graph with a
vertex~$v^\star$ adjacent to every vertex of~$G$ (see
\lncsarxiv{the full
  version~\cite{full-version-arxiv}}{Lemma~\ref{lem:bcop->genus} in
  Appendix~\ref{sec:missing-lemmas}}
and the $2^{g^{O(1)}}n$ time algorithm for genus~\cite{KawarabayashiMR08}.
\end{proof}

To prove our main result (Theorem~\ref{thm:bco-FPT}) we develop an
algorithm that tests whether $\bco(G)=k$ in FPT time with respect
to~$k$.
\review{R1: What algorithm do you mean here? I think you should start
  the page with smth like "Here we prove Theorem 3..." * looks fine
  for me. JP: looks fine for me.}
Our algorithm is inspired by recent works on circular layouts with at
most $k$ crossings~\cite{Bannister_2014} and circular layouts where
each edge is crossed at most $k$ times~\cite{ckllw-bo-GD17}.
In both of these prior works, it is first observed that the graphs
admitting such circular layouts have treewidth~$O(k)$, and then
algorithms are developed using Courcelle's theorem, which establishes
that expressions in MSO$_2$ logic can be evaluated efficiently.
(For basic definitions of both treewidth and MSO$_2$ logic,
see \lncsarxiv{the appendix of the full
  version}{Appendix~\ref{sec:mso2+courcelle}}.)

\begin{theorem}[Courcelle \cite{courcelle1990,courcelle2012book}]
  \label{thm:courcelle}
  For any integer $t \geq 0$ and any MSO$_2$ formula~$\psi$ of
  length~$\ell$, an algorithm can be constructed which takes
  a graph $G$ with $n$ vertices, $m$ edges, and treewidth at most $t$
  and decides in
  $O(f(t,\ell)\cdot (n+m))$ time whether $G \models \psi$ where the
  function $f$ from this time bound is a computable function of $t$
  and $\ell$.
\end{theorem}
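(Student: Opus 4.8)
The plan is to reduce model checking of the MSO$_2$ formula $\psi$ on the bounded-treewidth graph $G$ to running a finite tree automaton on a labelled tree, following the automata-theoretic paradigm of B\"uchi--Elgot--Trakhtenbrot as lifted to trees by Thatcher--Wright and Doner. First I would compute a tree decomposition of $G$ of width at most $t$; since $G$ is assumed to have treewidth at most $t$, Bodlaender's algorithm produces one in time $f_0(t)\cdot n$ for a computable $f_0$. I would then transform it into a \emph{nice} tree decomposition with $O(n)$ nodes, each of one of the standard types (leaf, introduce-vertex, introduce-edge, forget, join) and each carrying a bag of at most $t+1$ vertices.

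Next I would encode the pair consisting of $G$ and this nice decomposition as a tree $T$ labelled over a finite alphabet $\Sigma_t$ whose size depends only on~$t$. The crucial observation is that, because every bag has bounded size, each node needs only to record bounded local information: its type, its at-most-$(t+1)$-element bag, and which pair of bag vertices is made adjacent at an introduce-edge node. Recording edges explicitly at introduce-edge nodes is exactly what lets me later simulate the edge-set quantification of MSO$_2$ (as opposed to plain MSO$_1$). I would then translate $\psi$ into an MSO formula $\psi'$ over $\Sigma_t$-labelled trees: each vertex and each edge of $G$ is associated with a canonical node of $T$ (for a vertex, the topmost bag containing it; for an edge, its introduce-edge node), so set-quantifiers over $V(G)$ and over $E(G)$ become set-quantifiers over nodes of $T$, and the adjacency and incidence atoms of $\psi$ become first-order statements about labels along $T$. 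The length of $\psi'$ is bounded by a computable function of $\ell$ and~$t$.

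Finally I would invoke the characterisation of MSO on finite labelled trees by finite tree automata: for $\psi'$ there is an equivalent deterministic bottom-up tree automaton $A_{\psi'}$ accepting exactly the $\Sigma_t$-labelled trees that satisfy $\psi'$, whose number of states is bounded by a computable $f(t,\ell)$. Running $A_{\psi'}$ on $T$ in a single bottom-up pass processes each node in time depending only on the (bounded) transition table, so the whole computation takes $O(f(t,\ell)\cdot(n+m))$ time, and $G \models \psi$ holds iff $A_{\psi'}$ accepts~$T$.

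The main obstacle is the ``logic equals automata'' equivalence for MSO over trees that underlies the whole argument, together with the faithful translation of MSO$_2$ on $G$ into MSO on~$T$~-- in particular making the edge quantification go through by reifying each edge at its introduce-edge node and checking that this encoding does not depend on the chosen decomposition. I would also note that the resulting $f(t,\ell)$ is in general non-elementary~-- a tower of exponentials in $\ell$ arising from the repeated determinisation in the automaton construction~-- but this is consistent with the statement, which asks only for a computable bound.
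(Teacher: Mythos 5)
The paper does not prove this theorem at all: it is imported by citation to Courcelle's original work and the Courcelle--Engelfriet book, so there is no in-paper proof to compare against. Your sketch is the standard automata-theoretic proof of Courcelle's theorem (Bodlaender's linear-time decomposition, a nice tree decomposition encoded as a tree over a finite alphabet with edges reified at introduce-edge nodes to support MSO$_2$ edge-set quantification, translation of $\psi$ into MSO over labelled trees, and the Thatcher--Wright/Doner logic-to-automata equivalence followed by a single bottom-up run), and it is correct in outline, including the remark that $f(t,\ell)$ is in general non-elementary.
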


We proceed along the lines of Bannister and
Eppstein~\cite{Bannister_2014}, who used a similar approach to show
that edge crossing minimization in a circular layout is in FPT (as
mentioned in the introduction).
We start by very carefully describing a surface (in the spirit of
Observation~\ref{obs:lift})
onto which we will lift our drawing.
We will then examine the structure of this surface (and our algorithm)
for the case of one bundled crossing and finally for $k$ bundled
crossings.

\subsection{Constructing the Surface Determined by a Bundled Drawing}
\label{sec:constructing-surface}

Consider a bundled circular drawing $D$.
Note that adding parallel edges to the drawing (i.e., making our
graph a multi-graph) allows us to assume that every bundled crossing has four
distinct frame edges and can be done without modifying the number of bundled
crossings; see Fig.~\ref{fig:dwarf-village}.
Each bundled crossing $B$ defines a Jordan curve made up of the
four Jordan
arcs $\tilde e_1$, $\tilde e_2$, $\tilde e_3$, $\tilde e_4$ in clockwise order taken
from its
four frame edges $e_1, \ldots, e_4$  respectively (here $(e_1,e_3)$
and $(e_2, e_4)$ frame the two bundles and $e_i = u_iv_i$).
Similarly to Observation~\ref{obs:lift}, we can construct a surface
$\surfgk$ by creating a flat handle (note that this differs
from the usual definition of a handle since our flat handles have a
boundary)%
\review{R3: l.229: A flat handle is not a handle. A surface with
  handles has no boundary. You create a surface with boundary (where
  the boundary consists of duplicated of the edge pieces $(\tilde
  e)_i, i=1,2,3,4$). Since the bundled crossing number is closely
  related to the oriented genus, it would be good to clarify that flat
  handles are different from handles.}%
\review{R3: lines 229--238: I find the definition of the surface
  $\cal S$ and the subdivision $\Omega$ a bit confusing. I'd prefer a
  more rigourous topological definition of these objects.}
on top of $D$ which connects $\tilde e_2$ to $\tilde e_4$
and doing so for each bundled crossing.
We then lift the drawing $D$ onto $\surfgk$
by rerouting the edges of one of the bundles over its corresponding handle for each bundled crossing
$B$
obtaining the lifted drawing $D_{\surfgk}$.
To avoid the crossings in $D_{\surfgk}$ of the frame edges that can occur at the foot of the handle of
$B$ we can make the handle a bit wider and add \emph{corner-cuts} (as illustrated in Fig.~\ref{fig:one-bc})
to preserve the topology of the surface.
Thus, $D_{\surfgk}$ is crossing-free.

We now cut $\surfgk$ into \emph{components} (maximal
connected subsets) along the frame edges and corner-cuts of each bundled crossing,
resulting in a subdivision~$\surfsub$ of~$\surfgk$.%

We use $D_\surfsub$ to denote the sub-drawing of $D_{\surfgk}$%
\review{R3: At first read, I thought you simply clip the drawing to
  $\Omega$. Emphasize that the drawing of $G$ on $S$ is crossing-free,
  i.e., it is an embedding.}
on $\surfsub$, i.e., $D_\surfsub$ is missing the frame edges
since these have been cut out.
We now consider the components of~$\surfsub$.
Notice that every edge of $D_\surfsub$
is contained in one component of $\surfsub$.
In order for a component $s$ of $\surfsub$
to contain an edge $e$ of $D_\surfsub$, $s$ must have
both endpoints of $e$ on its boundary.
With this in mind we focus on the components of
$\surfsub$ where each one has a vertex of $G$ on its boundary and
call such components \emph{regions}.
Observe that a crossing in $D$ which does not involve a frame edge
corresponds, in $D_\surfsub$, to a pair of edges where one goes
over a handle and the other goes underneath.

\subsection{Recognizing a Graph with One Bundled Crossing}
\label{sec:onebc}

We now discuss how to recognize if an $n$-vertex graph $G={(V, E)}$ can be drawn in a circular layout with one bundled crossing.
Consider a bundled circular drawing~$D$ of~$G$ consisting of
one bundled crossing. The bundled crossing consists of two bundles, and so a set $\fedges$ of
four frame edges.
By $V(\fedges)$ we denote the set of vertices
incident to frame edges.
Via the construction above, we obtain the subdivided surface
$\surfsub$; see Fig.~\ref{fig:one-bc}.
Let $r_1$ and $r_2$ be the regions that are each
bounded by a pair of frame edges corresponding to one of the bundles,
and let $r_3, \dots, r_6$ be %
the regions each bounded by one edge from one pair
and one from the other pair;
see Fig.~\ref{fig:one-bc}.  These are all the regions of~$\surfsub$.
Since, as mentioned before, each of the non-frame edges
of $G$ (i.e., each $e \in E(G) \setminus \fedges$)
along with its two endpoints is contained in exactly one of these regions,
 each component
of $G\setminus V(\fedges)$ including the edges connecting it to vertices of
$V(\fedges)$ is drawn in~$D_\surfsub$ in some
region of~$\surfsub$.
In this sense, for each region $r$ of $\surfsub$, we use $G_r$ to
denote the subgraph of~$G$ induced by the components of $G\setminus
V(\fedges)$ contained in~$r$, %
including the edges connecting them to vertices in $V(\fedges)$.
Additionally, each vertex of $G$ is either incident to an edge in
$\fedges$ (in which case it is on the boundary of at least two
regions) or it is
on the boundary of exactly one region.

\begin{figure}[tb]
  \begin{subfigure}[t]{0.22\textwidth}
    \centering
    \includegraphics[page=2,scale=.8]{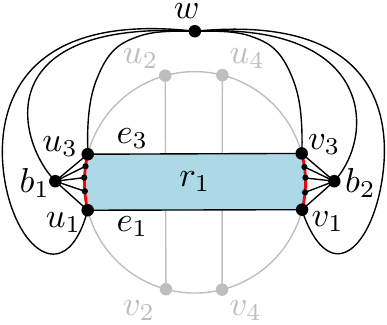}
    \caption{}
    \label{fig:regions-cprime}
  \end{subfigure}
  \hfill
  \begin{subfigure}[t]{0.22\textwidth}
    \centering
    \includegraphics[page=3,scale=.8]{region}
    \caption{}
  \label{fig:regions}
   \end{subfigure}
  \hfill
  \begin{subfigure}[t]{0.26\textwidth}
    \centering
    \includegraphics[page=1,scale=.8]{region}
    \caption{}
    \label{fig:regions-gr}
  \end{subfigure}
  \hfill
  \begin{subfigure}[t]{0.23\textwidth}
    \centering
    \includegraphics[page=4,scale=.8]{region}
    \caption{}
    \label{fig:regions-gr3}
  \end{subfigure}
  \caption{(a) Bundled crossing; (b) regions, corner-cuts in blue;
    (c),(d) augmented graphs $G^*_{r_1}$ and $G^*_{r_3}$ consist of
    the edges of $G_{r_1}$ and $G_{r_3}$ (blue), augmentation vertices
    and edges (black)}
    \label{fig:one-bc}
\end{figure}

Note that there are two types of regions: $\{r_1$, $r_2\}$ and $\{r_3$,
$r_4$, $r_5$, $r_6\}$.  Consider a region of the first type, say~$r_1$; see Fig.~\ref{fig:one-bc}.
Observe that~$r_1$ is a topological disk, i.e., $G_{r_1}$ is outerplanar.
Moreover, $G_{r_1}$ has a special outerplanar drawing where on the boundary
of $r_1$ (in clockwise order) we see  the frame edge~$e_1$, the vertices
mapped to the $(u_1,u_3)$-arc, the frame edge $e_3$, then the vertices
mapped to the $(v_3,v_1)$-arc.
We now describe how to augment $G_{r_1}$ to a planar graph $G^*_{r_1}$
where in every planar embedding of $G^*_{r_1}$ the sub-embedding of $G_{r_1}$
has this special outerplanar form\footnote{This augmentation may sound
overly complicated, but is written as to easily generalize to more bundled crossings.}.
The vertex set of $G^*_{r_1}$ is $V(G_{r_1}) \cup \{h,b_1,b_2\}$ where
we call $h$ \emph{hub}
vertex and $b_1$ and $b_2$ \emph{boundary} vertices (one for each arc
of the boundary of~$r_1$ to which vertices can be mapped); see
Fig.~\ref{fig:one-bc}.  The graph $G^*_{r_1}$ has four types of edges;
the edges in~$E(G_{r_1})$, edges that make~$h$ the hub of a wheel whose
cycle is $C=(v_1,b_2,v_3,u_3,b_1,u_1,v_1)$, edges from~$b_1$ to the
vertices on the $(u_1,u_3)$-arc, and edges from~$b_2$ to the vertices
on the $(v_3,v_1)$-arc (both including the end points).
Clearly, we can obtain a planar embedding of $G^*_{r_1}$ by drawing the
elements of $G^*_{r_1} \setminus G_{r_1}$ ``outside'' of the outerplanar drawing of $G_{r_1}$ described before.
Moreover, every planar embedding of $G^*_{r_1}$ contains an outerplanar
embedding of $G_{r_1}$ that can be drawn in the special form needed to
``fit'' into $r_1$, in the sense that all of $G_{r_1}$ lies (or can be
put) inside the simple cycle~$C$.  (For example, if, say, $b_1$ is a
cut vertex, the component hanging off~$b_1$ can be embedded in the
face $(h,b_1,u_3,h)$.  But then it can easily be moved into~$C$.
Similarly, a component that is incident only to~$u_3$ and~$v_3$ can
end up in the face $(h,u_3,v_3,h)$, but again, the component can be
moved inside~$C$.)

\review{R3: This might be a good place to introduce the term
  "region graph" used later.  AW: Not used anymore.}
Similarly, for a region of the second type, say $r_3$,
the graph $G_{r_3}$ is outerplanar with a special drawing where
all the vertices must be on the $(u_3, u_2)$-arc of the disk subtended by
the two frame edges $e_3$ and $e_2$ bounding the region $r_3$.
We augment similarly as for $r_1$; see Fig.~\ref{fig:one-bc}.
For the augmented graph $G^*_{r_3}$, we add to $G_{r_3}$
a boundary vertex~$b$ neighboring all vertices on the
$(u_3,u_2)$-arc and a hub vertex~$h$ adjacent to~$u_2$, $b$,
and~$u_3$.  Again, $G^*_{r_3}$ is planar since $G_{r_3}$ is
outerplanar due to $r_3$ being a topological disk.  Moreover, as $b$
is adjacent to all vertices of~$G_{r_3}$, in every planar embedding
of~$G^*_{r_3}$, $G_{r_3}$ is embedded outerplanarly and, since $b$
occurs on one side of the triangle $u_3u_2h$, the edge $u_3u_2$
occurs on the boundary of this outerplanar embedding of~$G_{r_3}$.
Thus, each planar embedding of $G^*_{r_3}$ provides an outerplanar
embedding of $G_{r_3}$ that fits into~$r_3$.

Note that each $G_{r_i}$ fits into $r_i$ because its
augmented graph $G^*_{r_i}$ is planar~($\star$).
Moreover, as outerplanar graphs have treewidth
at most two~\cite{m-latroamog-IPL79}, each graph~$G_r$ is outerplanar,
and adding the (up to) eight frame vertices raises the treewidth by at
most~8, we see that the treewidth of~$G$ is at most~10.
Namely, in order for $G$ to have $\bco(G)=1$, it must have treewidth
at most~10 (and this can be checked in linear time using an algorithm
of Bodlaender~\cite{bodelaender96}).

To sum up, $G$ has a circular drawing $D$ with at most one bundled
crossing because it has treewidth at most~10 and there exist (i)~$\beta \le 4$
frame edges $e_1, e_2, \dots, e_\beta$ (this set is denoted $\fedges$) and $v_1, \ldots, v_\xi$ frame vertices (this set is denoted $\fvertices$),
(ii)~a particular circular drawing $D_\fedges$ of frame edges,
(iii)~the drawing of the one bundled crossing $B$,
and (iv)~$\gamma \le 6$ corresponding regions $r_1, \ldots, r_\gamma$
of the subdivided surface $\surfsub$
so that the following properties hold (note that the frame vertices
partition the boundary of the disk underlying $\surfsub$ into $\eta
\le 8$ (possibly degenerate) arcs $p_1, \ldots, p_\eta$ where each
such $p_j$ is contained in a unique region $r_{i_j}$ of $\surfsub$):
\begin{enumerate}
\label{msoprops}
\item %
$E(G)$ is partitioned into $E_0, E_1, \dots, E_\gamma$, where $E_0{=}\{f_1, \ldots, f_\beta\}$.
\label{property0-1}
\item %
$V(G)$ is partitioned into $V_0, V_1, \ldots, V_\eta$, where $V_0{=}\{u_1, \ldots, u_\xi\}$.
\label{property0-2}
\item The mapping $u_i \leftrightarrow v_i$ and $f_i \leftrightarrow e_i$ defines an isomorphism between the subgraph of $G$ formed by $(V_0,E_0)$ and graph  $(\fvertices,\fedges)$.
\label{property1}
  \item No vertex in $V(G) \setminus V_0$
    has incident edges $e \in E_i$, $e' \in E_j$ for~$i \neq j$.
    \label{property2-1}
  \item For each $v \in V_0$, and each
    edge $e$ incident to $v$, exactly one of the following is true:
    (i) $e\in E_0$ or (ii) $e \in E_i$ and $v$ is on the boundary of~$r_i$.
    \label{property2-2}
  \item For each $v \in V_j$, all edges incident to $v$ belong to $E_{i_j}$.
  \label{property2-3}
  \item For each region $r_i$,
  let $G_i$ be the graph $(V_0 \cup \bigcup_{j \colon i_j=i}V_j,E_i)$
  (i.e., the subgraph that is to be drawn in~$r_i$),
  and let $G^*_i$ be the corresponding augmented graph
  (i.e., as in $\star$ above).  Each $G^*_i$ is planar.
  \label{property3}
\end{enumerate}

We now describe the algorithm to test for a simple circular drawing with one bundled crossing.
First we check that treewidth of $G$ is at most~10.
We then enumerate drawings of up to four edges in the circle.
For the drawing~$D_\fedges$ that is valid for the set~$\fedges$
of frame edges of one bundled crossing,
\review{R3: This might make sense for k bundled crossings, but
  for one bundled crossing I can think of only one valid simple
  drawing: where all crossings between $e_1,...,e_4$ are on the
  boundary of the regions associated with the bundled crossing.}
we define our surface and its regions (which
makes the augmentation well-defined).
We have intentionally phrased these properties
so that it is clear that they are expressible in MSO$_2$
(see \lncsarxiv{the full
  version~\cite{full-version-arxiv}}{Appendix~\ref{sec:formula}}).
The only property that is not obviously expressible is the planarity
of~$G^*_i$.  To this end, recall that planarity is characterized by
two forbidden minors (i.e., $K_5$ and $K_{3,3}$)
and that, for every fixed graph $H$, there is an MSO formula
$\textsc{minor}_H$ so that for all graphs $G$, it holds that
$G \models \textsc{minor}_H$ if and only if $G$ contains $H$ as a
minor~\cite[Corollary 1.14]{courcelle2012book}.
Additionally, each $G^*_i$ can be expressed as an
\emph{MSO-transduction}\footnote{For the formalities of
  transductions, see the book of Courcelle and
  Engelfriet~\cite[Section 1.7.1, and Definitions 7.6 and
  7.25]{courcelle2012book}.} of $G$ and our variables (our
transduction can be thought of as a kind of 2-copying transduction).
Thus, by~\cite[Theorem 7.10]{courcelle2012book} using the transduction
and the MSO formula testing planarity, we can construct an MSO$_2$
formula $\iota$ so that when $G \models \iota$, $G^*_i$ is planar for
every $i$.
Therefore, Properties~\ref{property0-1}--\ref{property3} can be
expressed as an MSO$_2$ formula~$\psi$ and,
by Courcelle's theorem, there is a computable function~$f$
such that we can test (in $O(f(\psi,t)n)$ time) whether $G\models\psi$
for an input graph~$G$ of treewidth at most~$t$.
Thus, since our graph has treewidth at most~10, applying Courcelle's theorem completes our algorihtm.

\subsection{Recognizing a Graph with $k$ Bundled Crossings}
\label{sec:kbc}

We now generalize the above approach to $k$
bundled crossings.  In a drawing $D$ of $G$ together with a solution
consisting of~$k$ bundled crossings, there are $2k$ bundles making (up
to) $4k$ frame edges~$\fedges$.
As described above, these bundled crossings provide a surface $\surfgk$, its subdivision
$\surfsub$,
and the corresponding set of regions. The key ingredient above was that every region was
a topological disk. However, that is now non-trivial as our regions can go over
and under many handles.
To show this property, we first consider the following two partial
drawings $D_A(p)$ and $D_B(p)$ of a matching with
$p+1$ edges $f_0, f_1 \dots, f_p$  (see, e.g.,
Fig.~\ref{fig:conf-ex}) such that
\begin{itemize}[topsep=0pt]
\item edge $f_i$ crosses only $f_{i-1 \bmod p+1}$ and $f_{i+1 \bmod p+1}$
  for $i=0, \dots, p$;
\item the endpoints of each edge $f_i$, $i=1, \dots, p-2$, are
  inside the cycle $C$ formed by the crossing points and the edge-pieces
  between these crossing points;
\item both endpoints of $f_{p-1}$, only one endpoint of $f_0$, and only
  one endpoint of $f_p$ are contained in~$C$ in the drawing
  $D_A(p)$;
\item only one endpoint of $f_{p-1}$, only one endpoint of $f_0$, and
  no endpoints of $f_p$ are contained in~$C$ in the drawing
  $D_B(p)$.
\end{itemize}
Note that the partial drawings $D_A(p)$ and $D_B(p)$ differ only in how the last
edge is drawn with respect to the previous edge.
Arroyo et al.~\cite[Theorem 1.2]{edgap-abr-arXiv18} showed that such
partial drawings
are obstructions for pseudolinearity, that is,
they cannot be part of any pseudoline arrangement.
Therefore, neither of these partial drawings can be \emph{completed}
to a simple circular drawing, that is,
the endpoints of the edges cannot be extended so that they lie on a
circle which contains the drawing. We highlight this fact in the
following lemma.
\review{R3: Lemma 1 seems to follow from a result by Arroyo, Bensmail,
  and Richter: "Extending Drawings of Graphs to Arrangements of
  Pseudolines" arXiv:1804.09317}
\setCounters{lemma}
\wormhole{lemconf}
\newcommand{\lemconf}{%
  For a matching with $p+1$ edges $f_0, f_1, \dots, f_p$, neither the
  partial drawing $D_A(p)$ nor $D_B(p)$ can be completed to a simple
  circular drawing.}
\begin{lemma}
\label{lem:conf}
\lemconf
\end{lemma}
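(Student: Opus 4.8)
The plan is to reduce the statement to the obstruction result of Arroyo et al.~\cite[Theorem~1.2]{edgap-abr-arXiv18}, which asserts that the configurations $D_A(p)$ and $D_B(p)$ cannot be part of any pseudoline arrangement. Suppose for contradiction that one of them---say $D_A(p)$, the case of $D_B(p)$ being symmetric---could be completed to a simple circular drawing $\hat D$. Then all $2(p+1)$ endpoints of $f_0,\dots,f_p$ lie on a common circle, the edges are drawn inside the enclosed disk, and, by simplicity, any two of them meet at most once; moreover the prescribed crossings of $D_A(p)$ survive in $\hat D$, so $\hat D$ contains $D_A(p)$ as a sub-drawing.

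The key step I would carry out is to show that the edges of a simple circular drawing always extend to a pseudoline arrangement. First I would observe, by a parity argument, that in a simple circular drawing two edges cross if and only if their endpoints interleave along the circle: interleaving endpoints force an odd number of interior crossings, hence exactly one by simplicity, while non-interleaving endpoints force an even number, hence none. This is exactly the crossing pattern produced by drawing the same convex point set with straight-line chords, and straight chords extend to a genuine line arrangement by prolonging each chord to a full line---interleaving pairs keep their interior crossing, and each non-interleaving pair acquires its single crossing outside the disk. Since the convex (circular) placement of the endpoints pins down the drawing up to homeomorphism for the sparse crossing structure at hand (each $f_i$ meeting only $f_{i-1}$ and $f_{i+1}$), I would conclude that the edges of $\hat D$ likewise prolong to pseudolines forming a pseudoline arrangement.

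Combining the two steps, the pseudolines obtained from $f_0,\dots,f_p$ would form (part of) a pseudoline arrangement realising $D_A(p)$, contradicting~\cite[Theorem~1.2]{edgap-abr-arXiv18}; hence no completion exists and the lemma follows. I expect the bridge step to be the main obstacle: one must argue that the purely combinatorial crossing data of $\hat D$ really do correspond to an \emph{actual} topological extension to pseudolines rather than merely matching the crossing graph of a line arrangement. The convex position of the endpoints together with the fact that every edge of $D_A(p)$ or $D_B(p)$ participates in at most two crossings is what makes this identification go through, while matching our two configurations to the specific obstructions of Arroyo et al.\ requires only a routine comparison of the crossing structure.
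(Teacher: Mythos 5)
Your overall strategy matches the paper's: Lemma~\ref{lem:conf} is obtained there by citing the obstruction theorem of Arroyo et al.~\cite[Theorem~1.2]{edgap-abr-arXiv18} and asserting that, since $D_A(p)$ and $D_B(p)$ cannot be part of any pseudoline arrangement, they cannot be completed to a simple circular drawing. So the reduction itself is the intended one. The difference is that you try to supply the bridge (``every simple circular drawing extends to a pseudoline arrangement'') that the paper leaves implicit, and it is exactly there that your argument has a genuine gap.

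The parity argument gives you only that $\hat D$ has the \emph{same set of crossing pairs} as the straight-line convex drawing, i.e., a weak isomorphism. That is purely combinatorial: it does not produce a homeomorphism of the disk carrying $\hat D$ to the straight-line drawing, so you cannot simply transport the line-arrangement extension. The sentence ``the convex placement of the endpoints pins down the drawing up to homeomorphism'' is precisely the assertion that needs proof, and sparsity of the crossing structure is not by itself a reason for it; indeed, the existence of the obstructions $D_A(p)$, $D_B(p)$ is exactly the phenomenon of drawings that match a realizable crossing pattern yet are not pseudolinear, so invoking ``same crossing pattern, hence pseudolinear'' risks circularity. A clean way to close the gap without any detour through pseudolinearity is topological: the $p+1$ crossing points together with the edge pieces between them form a Jordan curve $C$ contained in the open disk, so the interior of $C$ is disjoint from the boundary circle; in any completion the ends of each $f_i$ must be prolonged until they reach that circle, so an end that the configuration places inside $C$ forces the prolonged arc to cross $C$, i.e., to meet a piece of some $f_j$ at a new point --- a self-intersection if $j=i$, a second crossing with $f_{i\pm1}$ if $j=i\pm1$, and otherwise a crossing between non-consecutive edges, all of which contradict the definition of $D_A(p)$ and $D_B(p)$ together with simplicity. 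Since both configurations place at least one end inside $C$, neither can be completed. With that (or an honest proof that simple circular drawings are pseudolinear) in place, your reduction goes through.
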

\begin{figure}[tb]
  \begin{subfigure}[b]{0.32\textwidth}
    \centering
    (a)
    \includegraphics[page=1]{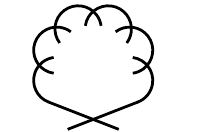}
  \end{subfigure}
  \hfil
  \begin{subfigure}[b]{0.32\textwidth}
    \centering
    (b)~~
    \includegraphics[page=2]{conf-ex}
  \end{subfigure}
  \hfil
  \begin{subfigure}[b]{0.32\textwidth}
    \centering
    (c)
    \includegraphics[page=3]{conf-ex}
  \end{subfigure}
  \caption{Configurations for $p=6$: (a)~$D_A(p)$,
    (b)~$D_B(p)$, and (c)~induced by~a~hole}
  \label{fig:conf-ex}
\end{figure}

Using this lemma we can now prove the following statement.

\begin{lemma}
  \label{lem:region-planar}
    Each region $r$ of $\surfsub$ is a topological
    disk\footnote{We slightly abuse this
    notion to also mean a simply connected set.}.
\end{lemma}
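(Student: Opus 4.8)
The plan is to argue by contradiction using Lemma~\ref{lem:conf}: if a region $r$ fails to be simply connected, I extract from it a cyclic family of frame edges whose pairwise crossings reproduce one of the configurations $D_A(p)$ or $D_B(p)$, which is impossible since the frame edges live in the simple circular drawing~$D$. To set up, observe that $D$ is drawn in a disk and $\surfgk$ is built from that disk by attaching orientable flat handles, so $\surfgk$ is orientable and each region — a connected component obtained by cutting along frame edges and corner-cuts — is a connected orientable surface with boundary. Crucially, each flat handle is slit open when we cut along the two frame edges of the bundle that was rerouted over it: these two frame edges run lengthwise over the handle-tube, so cutting along them opens the tube into a strip. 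Hence the cutting removes the genus that the handles contributed, every region is planar, and a region can fail to be a topological disk only by possessing more than one boundary component, i.e., a \emph{hole}. It therefore suffices to rule out holes.

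Assume then that $r$ has a hole, and let $\gamma$ be a simple closed curve in $r$ running parallel to the corresponding inner boundary component (taking an innermost such curve, $\gamma$ hugs a single hole). The boundary of every region consists of arcs of frame edges together with corner-cut arcs, where the corner-cuts merely round off the feet of the handles without separating $\gamma$ from the neighbouring frame edge. Tracing $\gamma$ thus visits a cyclic sequence of frame-edge pieces $\tilde f_0, \tilde f_1, \dots, \tilde f_p$ in which consecutive pieces $\tilde f_i$ and $\tilde f_{i+1}$ abut at a point where the frame edges $f_i$ and $f_{i+1}$ cross in~$D$. Since $D$ is simple, each such pair crosses exactly once, so these crossing points and arcs form precisely the cycle~$C$ appearing in the definition of $D_A(p)$ and $D_B(p)$ (see Fig.~\ref{fig:conf-ex}(c)). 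Classifying each endpoint $u_i,v_i$ of the $f_i$ as lying inside or outside~$C$ according to which side of $\gamma$ it sits on, I would verify that the induced subdrawing is exactly $D_A(p)$ or $D_B(p)$ for the corresponding~$p$.

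Since the $f_i$ and their crossings form a subdrawing of the simple circular drawing~$D$, that subdrawing is already completed to a simple circular drawing, contradicting Lemma~\ref{lem:conf}, which forbids completing $D_A(p)$ and $D_B(p)$. Hence no region has a hole, and every region of $\surfsub$ is a topological disk. The step I expect to be most delicate is the middle one: making rigorous that an innermost non-contractible loop can be pushed to a boundary-parallel curve whose frame edges cross in exactly the cyclic pattern of~$C$ (rather than in some unrelated, admissible pattern), and confirming that the inside/outside distribution of the endpoints $u_i,v_i$ matches one of the two listed configurations and not a third. Handling the corner-cut arcs and keeping track of which frame edge one traverses after each crossing is where the topological bookkeeping concentrates.
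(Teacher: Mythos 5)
Your reduction to the ``holes only'' case rests on the claim that cutting along the frame edges automatically kills all the genus, so that every region is planar and can fail to be a disk only by having extra boundary components. That claim is exactly the hard half of the lemma, and your justification for it (each handle band is locally slit into strips) does not establish it. Genus is a global property: even after every handle band has been cut into disk-shaped strips, a single region $r$ can a priori contain both a strip running \emph{over} a handle and the \emph{undertunnel} beneath that same handle, with the two connected through other parts of the surface. Such an $r$ contains a loop going over and under the handle, which is non-contractible in $\surfgk$ and hence in $r$; this loop is in general neither null-homotopic nor boundary-parallel, so your hole analysis (which only inspects innermost curves parallel to inner boundary components) never sees it. The paper devotes the first, and main, part of its proof to excluding precisely this configuration: it takes a Jordan arc $\gamma\subset r$ going over and under the same handle, observes that its orthogonal projection onto the disk self-intersects, and shows that the profile of frame edges along the projected boundary of $r$ enclosed by that loop forms the configuration $D_A(p)$, contradicting Lemma~\ref{lem:conf}. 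You need this argument (or a substitute for it); ``cutting opens each tube into a strip, hence no region has genus'' is a non sequitur.

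Your second part, excluding holes via a boundary-parallel curve whose frame-edge profile yields a forbidden configuration as in Fig.~\ref{fig:conf-ex}(c), does match the paper's treatment of the hole case, and you correctly flag the remaining bookkeeping about corner-cuts and the inside/outside classification of endpoints. But as written the proposal proves only half of the statement.
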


\begin{proof}
  First, we show that no region of $\surfsub$
  includes part of both a handle and its
  \emph{undertunnel}, that is, the part of the surface over which the handle was built.
  Then we will show that a region also does not include holes.

  Let $r$ be a region of the surface subdivision $\surfsub$.
  The boundary of this region is formed by pieces of frame edges that were lifted on
  the surface $\surfgk$ as described above and the additional corner-cuts
  as illustrated in Figure~\ref{fig:one-bc} in red.
  Consider the projection $r'$ of $r$ and its boundary
  on the drawing $D$ in the plane.
  Note that the projected boundary
  either follows an edge in $D$ or switches to some another edge
  via a corner-cut at an intersection point; see Fig.~\ref{fig:main-lemma}(a).

  Suppose now, for a contradiction, that $r$ contains both a handle and
  its undertunnel corresponding to the same bundled crossing $B=((e_1,
  e_3), (e_2, e_4))$.
  \review{R3: The word "include" suggest containment. But as far as I
    understand, the boundaries of regions may contain edges that go
    through the upper or lower part of a flat handle. Do you mean that
    $r$ contains part of both the undertunnel and the overpass. Also,
    even if $r$ contains a Jordan arc that traverses both parts of a
    flat handle, $r$ could still be simply connected.  You prove that
    $r$ does not contain any Jordan arc whose projection is has a
    self-intersection---please clarify how this is related to
    noncontractible cycles in $r$.}
  Then there is a Jordan arc~$\gamma \subset r$ going over and under this
  handle making a loop; see Fig.~\ref{fig:main-lemma}(b).
  Note that the orthogonal projection $\gamma'$ of $\gamma$ on the disk of the drawing $D$
  self-intersects. The profile of edges along the
  projected boundary of $r$ that is enclosed by $\gamma'$
  then inevitably contains a partial drawing $D_A(p)$; see Fig.~\ref{fig:main-lemma}(c).
  And according to Lemma~\ref{lem:conf},
  such a partial drawing cannot be completed
  to a valid simple circular drawing; contradiction.

   As for holes, it is easy to see that if $r$ had a hole, the profile of the boundary
  edges around this hole would %
  give a partial drawing of edges as illustrated in Fig.~\ref{fig:conf-ex}(c).
  Therefore, the region $r$ is a proper
  topological disk.
\end{proof}

\begin{figure}[tb]
  \begin{subfigure}[b]{0.2\textwidth}
    \centering
    \includegraphics{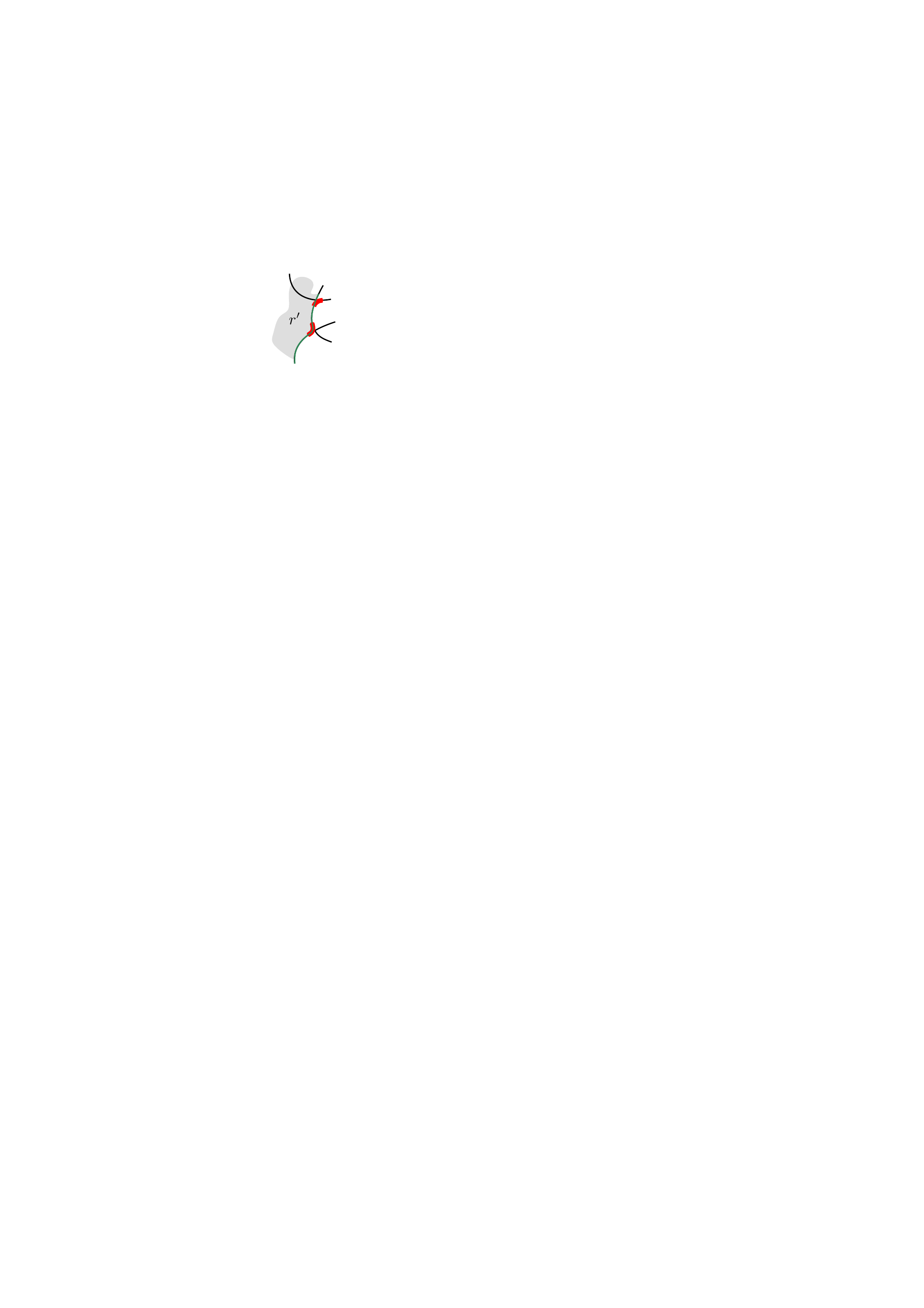}
    \caption{}
  \end{subfigure}
  \hfill
  \begin{subfigure}[b]{0.2\textwidth}
    \centering
    \includegraphics{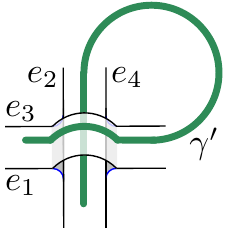}
    \caption{}
  \end{subfigure}
  \hfill
  \begin{subfigure}[b]{0.4\textwidth}
    \centering
    \includegraphics{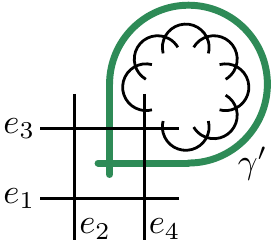}
    \caption{}
  \end{subfigure}
  \caption{%
      (a)~Projection $r'$ of the region $r$ and its boundary (green, the corner-cuts are in blue)
      onto the disk of the drawing $D$
      (b)~projection~$\gamma'$ of a Jordan arc~$\gamma$ that goes over and under the same
      handle;
      (c)~profile of edges of the projected boundary of $r$ enclosed by the loop made by $\gamma'$
      form a partial drawing $D_A(p)$.
      }
   \label{fig:main-lemma}
\end{figure}

The next lemma concerning treewidth is
a direct consequence of Lemma~\ref{lem:region-planar}.

\newcommand{\lemtreewidth}{%
  \label{lem:treewidth}
  If a graph $G$ admits a circular layout with $k$ bundled crossings
  then its treewidth is at most $8k+2$.}
\begin{lemma}
  \lemtreewidth
\end{lemma}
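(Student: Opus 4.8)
The plan is to reduce the bound to the treewidth of a single outerplanar piece via the standard fact that, for any vertex subset $S$ of a graph $H$, the treewidth of $H$ is at most the treewidth of $H - S$ plus $|S|$. I would apply this with $H = G$ and $S = V(\fedges)$, the set of frame vertices. Concretely, starting from a width-$t$ tree decomposition of $G - V(\fedges)$ and inserting every vertex of $V(\fedges)$ into each of its bags yields a tree decomposition of $G$ of width at most $t + |V(\fedges)|$: the enlarged bags cover every edge incident to a frame vertex (its non-frame endpoint already appears in some bag, and the frame endpoint now appears in all of them), while the connectedness condition for each frame vertex is immediate and the coverage of the remaining edges is inherited unchanged.

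It then remains to bound the treewidth of $G - V(\fedges)$ and to count the frame vertices. For the first, I would use that every non-frame edge of $G$, together with both of its endpoints, is drawn inside a single region of $\surfsub$, so each connected component of $G - V(\fedges)$ lies in exactly one region. By Lemma~\ref{lem:region-planar} that region is a topological disk, and in the lifted embedding $D_\surfsub$ the vertices of the component lie on the boundary of this disk; hence the component is outerplanar. Since outerplanar graphs have treewidth at most two~\cite{m-latroamog-IPL79} and a disjoint union of outerplanar graphs is again outerplanar, $G - V(\fedges)$ has treewidth at most two.

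For the count, a solution with $k$ bundled crossings has $2k$ bundles and hence at most $4k$ frame edges, so $|V(\fedges)| \le 8k$. Combining this with the deletion bound above gives a treewidth of at most $8k + 2$, as claimed; note that for $k = 1$ this recovers the bound of ten established for a single bundled crossing.

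The heavy lifting is done by Lemma~\ref{lem:region-planar}, so I expect no genuine obstacle in this argument: it is essentially combinatorial bookkeeping. The only point deserving a little care is the claim that $G - V(\fedges)$ really splits as a disjoint union over the regions of $\surfsub$~-- equivalently, that no non-frame edge crosses from one region into another. This is precisely the earlier observation that each non-frame edge, together with its endpoints, is contained in a single region, and it is what lets the per-region outerplanarity assemble into the global treewidth-two bound.
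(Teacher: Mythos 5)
Your proof is correct and follows essentially the same route as the paper: delete the at most $8k$ frame vertices, observe via Lemma~\ref{lem:region-planar} that the remainder decomposes into outerplanar pieces (treewidth at most two), and add the deleted vertices back at a cost of one per vertex. You merely spell out the tree-decomposition bookkeeping and the per-region outerplanarity in more detail than the paper does.
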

\begin{proof} %
  If the graph $G$ can be drawn in a circular layout with $k$ bundled
  crossings then there exist at most $4k$ frame edges.  According to
  Lemma~\ref{lem:region-planar}, the removal of their endpoints breaks
  up the graph into outerplanar components.  The treewidth of an
  outerplanar graph is at most two~\cite{m-latroamog-IPL79}.
  Moreover, adding a vertex to a graph raises its treewidth by at
  most one.  Thus, since deleting at most $8k$ frame vertices
  leaves behind an outerplanar graph, $G$ has treewidth at most
  $8k+2$.
\end{proof}

We now prove Theorem~\ref{thm:bco-FPT}, that deciding
whether $\bco(G) \le k$ is FPT in $k$.

\begin{proof}[of Theorem~\ref{thm:bco-FPT}]
  We use Lemma~\ref{lem:region-planar} and extend the algorithm of
  Section~\ref{sec:onebc}.

  Suppose $G$ has a circular drawing $D$ with at most $k$ bundled
  crossings.
  In $D$ we see the set $\fedges$ of (up to) $4k$
  frame edges of these bundled crossings.
  As before,
  $\fedges$ together with $D$ defines a subdivided topological
  surface $\surfsub$ containing a set of regions
  $R$.  As in the one bundled crossing case,
  each edge of $G$ is %
  in exactly one such region,
  and each
  vertex of $G$ either is incident to an edge in $\fedges$ (in
  which case it belongs to at least two regions) or belongs to
  exactly one region.

  \begin{figure}[tb]
    \begin{subfigure}[t]{0.49\textwidth}
      \centering
      \includegraphics[page=1,scale=.95]{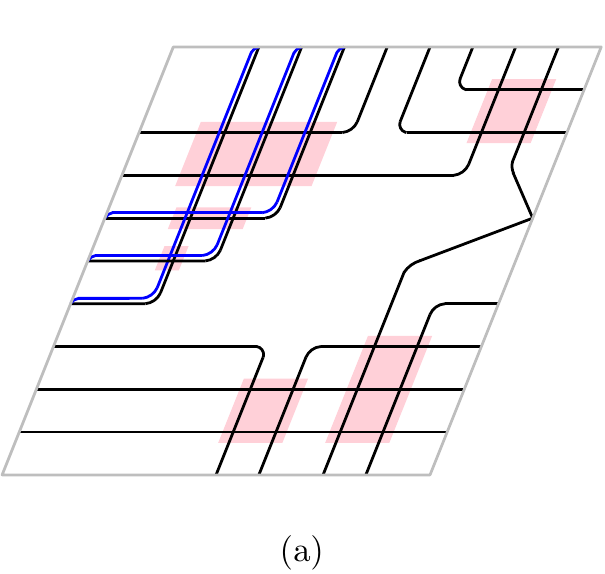}
    \end{subfigure}
    \begin{subfigure}[t]{0.49\textwidth}
      \centering
      \includegraphics[page=2,scale=.95]{dwarf-village-modern-sheared-final}
    \end{subfigure}
    \caption{%
      (a) A bundled drawing $D$ with six bundled crossings (pink);
      parallel (blue) edges can be inserted to avoid degenerate
      bundled crossings; (b) the corresponding surface of genus 6; the
      components of the surface that are not regions are marked in
      green; the region $r$ (light blue) has a boundary consisting of
      the arcs of the disk (red) and the arcs $c_1$, $c_2$, $c_3$, and
      $c_4$ (traced in orange).}
     \label{fig:dwarf-village}
  \end{figure}

  Throughout the proof we will refer to Fig.~\ref{fig:dwarf-village} for an
  example.
  By Lemma~\ref{lem:region-planar}, each region $r$ is a topological
  disk and as such its graph
  $G_r$ is outerplanar with a quite special drawing $D_r$ described as follows.
  In particular, if we
  trace the boundary of $r$ in clockwise order, we see that it is made
  up of arcs $p_1, \ldots, p_\alpha$ of~$\surfgk$,
  marked in red in Fig.~\ref{fig:dwarf-village}(b)
  (such arcs can degenerate to single points),
  and Jordan arcs $c_1, \ldots, c_\alpha$,
  traced in orange in Fig.~\ref{fig:dwarf-village}(b),
  each of which connects two such arcs of the disk.
  For $i \in \{1,\dots,\alpha\}$,
  let~$u_i$ and~$u'_i$ be the end points of~$p_i$, in clockwise order.
  So $u'_i$ and $u_{i+1}$ are the endpoints of~$c_i$.
  No vertex of $G_r$ lies in the interior of~$c_i$.

  We now describe $G^*_r$.  First, we add a hub vertex~$h$.  Then, for
  each $i \in \{1,\dots,\alpha\}$, if~$u'_i$ and~$u_{i+1}$ (where
  $u_{\alpha+1}$ is~$u_1$) are not adjacent, we add an edge between
  them.  If~$p_i$ is non-degenerate, we add  
  a boundary vertex~$b_i$ adjacent to all vertices on~$p_i$
  (including~$u_i$ and~$u'_i$) and make $h$ adjacent to~$u_i$, $b_i$,
  and~$u_i'$.  Otherwise, we make $h$ adjacent to $u_i=u_i'$ and,
  for technical reasons (see \lncsarxiv{the full
    version}{Appendix~\ref{sec:formula}}), we identify~$b_i$
  with~$u_i$ and~$u_i'$.

  Observe that the resulting graph~$G^*_r$ is planar due to the
  special outerplanar drawing of~$G_r$ in~$r$.
  Moreover, in every planar embedding of $G^*_r$, there is an
  outerplanar embedding of $G_r$ where the cyclic order of the arcs
  $c_i$ and the sets of vertices mapped to the $p_i$'s match their
  cyclic order in $r$, implying that $G_r$ fits into $r$.
  This is due to the fact that the simple cycle~$C'$ around~$h$
  must be embedded planarly, with all of~$G_r$ inside (with the
  possible and easy-to-fix exceptions described in
  Section~\ref{sec:onebc} concerning the cycle~$C$ there).  Then the
  order of the vertices in an outerplanar embedding of~$G_r$
  is the order of the vertices incident to $b_1,\dots,b_\alpha$
  in a planar embedding of~$G_r^*$.
  \review{R4, l.11: why does this force the cyclic order of frame
    edges to match how they occur in $r$?}
  So the planarity of~$G^*_r$ guarantees that~$G_r$ fits into~$r$ as needed.

  The reason why $G$ has a circular drawing $D$ with at most $k$
  bundled crossings is that there is a $\beta$-edge $k$-bundled
  crossing drawing~$D_\fedges$ (of the graph formed by~$\fedges$),
  whose corresponding surface $\surfgk$ consists of regions $r_1,
  \dots, r_\gamma$ (note: $\gamma \le 2\beta \le 8k$) so that
  Properties~\ref{property0-1}--\ref{property3} hold.

  Our algorithm first checks that the treewidth of $G$ is at most
  $8k+2$.  Recall that this can be done in linear time (FPT
  in~$k$)~\cite{bodelaender96}.  It then enumerates all
  possible simple drawings of at most $4k$ edges in the circle\footnote{i.e., at most $4k$ curves extending to infinity in both directions where each pair of curves cross at most once. The number of such drawings is proportional to $k$, and efficient enumeration has been done for the case when every pair of curves cross exactly once~\cite{felsner}.}.
  \review{R3: Pseudolines pairwise cross each other, but a simple 1-page drawing
  can have disjoint edges. So you need to include additional arrangements.}
  For each drawing, it further enumerates
  the possible ways to form $k$ bundled crossings so that every edge is a frame edge of at least one
  bundled crossing.
  Then, for each such bundled drawing $D_\fedges$, we build an
  MSO$_2$ formula~$\varphi$ (see
  \lncsarxiv{the full version}{Appendix~\ref{sec:formula}}) to
  express Properties~\ref{property0-1}--\ref{property3}.
  Finally, since $G$ has treewidth at most $8k+2$, we can apply
  Courcelle's theorem on $(G,\varphi)$.
\end{proof}

\section{Open Problems}
\label{sec:open}

Given our new FPT algorithm for simple circular layouts, it would be
interesting to improve its runtime and to investigate whether a
similar result can be obtained for general simple layouts.
A starting point could be the FPT algorithm of Kawarabayashi et
al.~\cite{Kawarabayashi_2007} for computing
the usual crossing number of a graph.

\paragraph{Acknowledgements.}
We thank Bruno Courcelle for clarifying discussions on the tools
available when working with his meta-theorem and in particular MSO$_2$.

\bibliographystyle{splncs04}
\bibliography{abbrv,bundled_crossings}

\lncsarxiv{\end{document}}{}

\clearpage
\appendix

\section*{Appendix}

\section{Definitions: Treewidth and MSO$_2$}
  \label{sec:mso2+courcelle}

  The purpose of this subsection is to provide the necessary
  definitions (i.e., treewidth and MSO$_2$) needed for Courcelle's
  theorem; see Theorem~\ref{thm:courcelle}.

  The concept of \emph{treewidth} was introduced by Robertson and
  Seymour~\cite{robertson1984graph}.  A {\em tree decomposition} of a
  graph $G$ is a pair $({X}, T)$, where $T$ is a tree and ${X}=\{{X}_i
  \mid i\in V(T) \}$ is a family of subsets of $V(G)$, called
  \emph{bags}, such that (1) for all $v \in V(G)$, the set of nodes
  $T_v = \{i \in V(T) \mid v \in {X}_i\}$ induces a non-empty
  connected subtree of $T$, and (2) for each edge $uv \in E(G)$ there
  exists $i \in V(T)$ such that both $u$ and $v$ are in ${X}_i$.  The
  maximum of $|{X}_i|-1$, $i\in V(T)$, is called the {\em width} of
  the tree decomposition.  The {\em treewidth}, $tw(G)$, of a graph
  $G$ is the minimum width over all tree decompositions of $G$.  For
  our purposes, an important fact is that every outerplanar graph $G$
  has $tw(G) \le 2$~\cite{m-latroamog-IPL79}.

  \emph{Extended Monadic Second-Order Logic} (MSO$_2$) is a subset of
  \emph{second-order logic} that can be used to express certain graph
  properties.  It is built from the following primitives:
  \begin{itemize}
  \item variables for vertices, edges, sets of vertices, and sets of
    edges;
  \item binary relations for: equality ($=$), membership in a set
    ($\in$), subset of a set ($\subseteq$), and edge--vertex incidence
    ($I$);
  \item standard propositional logic operators: $\lnot$, $\land$,
    $\lor$, $\rightarrow$, and $\leftrightarrow$;
  \item standard quantifiers ($\forall,\exists$) which can be applied
    to all types of variables.
  \end{itemize}
  Note that, if we drop the ``$_2$'' then we have \emph{Monadic Second-Order Logic} (MSO)
  where the only difference is that we are now not allowed to quantify over edge sets.

  For a graph $G$ and an MSO$_2$ formula~$\psi$, we use $G \models
  \psi$ to indicate that $\psi$ can be satisfied by $G$ in the obvious
  way.

\section{Missing Proofs of Section~\ref{sec:main}}
\label{sec:missing-lemmas}

\begin{lemma}
  \label{lem:bcop->genus}
  Given a graph $G = (V,E)$, let $G^\star$ be the graph obtained
  from~$G$ by adding a new vertex~$v^\star$ adjacent to every vertex
  of~$G$.  Then $\bco'(G) = \g(G^\star)$.
\end{lemma}

\begin{proof}
  Similarly as in \cite[Theorem~1]{afp-bcn-GD16}, it is easy to see
  that $\bco'(G)$ is an upper bound for the genus of $G^\star$, because,
  according to Observation~\ref{obs:lift}, we can lift any circular
  drawing of $G$ onto a surface $\surfgk$ of genus $\bco'(G)$
  and then we can add $v^\star$ using the outside of the circle.
  Clearly, this produces a crossing-free drawing of $G^\star$ on the
  surface $\surfgk$.

  It remains to show that given a crossing-free drawing of $G^\star$ on a
  surface of genus $k$, we can construct a circular drawing of $G$
  with at most $k$ bundled crossings.  Consider a drawing of $G^\star$ on
  a surface $\surfgk$ of genus $k$;
  see Fig.~\ref{fig:gprime} for instance.
  \begin{figure}[tb]
    \begin{subfigure}[b]{0.32\textwidth}
      \centering
      \includegraphics[page=1]{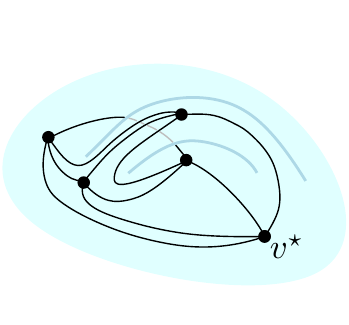}
      \caption{the graph $G^\star$}
      \label{fig:gprime}
    \end{subfigure}
    \hfill
    \begin{subfigure}[b]{0.32\textwidth}
      \centering
      \includegraphics[page=2]{bcop-fpt}
      \caption{modifying the drawing}
      \label{fig:drag}
    \end{subfigure}
    \hfill
    \begin{subfigure}[b]{0.32\textwidth}
      \centering
      \includegraphics[page=3]{bcop-fpt}
      \caption{fundamental polygon}
      \label{fig:polygon}
    \end{subfigure}
    \caption{
      Obtaining a circular drawing with $k$ bundled crossings of
      $G$ from the embedding of $G^\star$ on a surface of genus $k$.}
  \end{figure}
  We can modify the drawing
  so that all the neighbors $N(v^\star)$ of $v^\star$ in $G^\star$
  are placed in an $\epsilon$-neighborhood of~$v^\star$
  in~$\surfgk$ %
  (which is a
  topological disk).  We now explain the modification in more detail.
  Consider all the edges incident to $v^\star$ in the drawing and
  drag each neighbor~$u$ of~$v^\star$ along the edge~$uv^\star$ (as
  illustrated in Fig.~\ref{fig:drag}) until it reaches the
  $\epsilon$-neighborhood on the surface $\surfgk$.  Since for
  each $u\in N(v^\star)$ the edges $uw \in E$ with $w \neq v^\star$
  are bundled together at the position where $u$ was on the surface and
  dragged together with~$u$ along the edge~$uv^\star$, this does not
  introduce any crossings.  Then we use the fundamental polygon
  representation~\cite[Theorem~1]{afp-bcn-GD16} to the modified
  drawing of $G^\star$ on the surface $\surfgk$ of genus
  $k$. Since all the vertices are located on the boundary of the
  $\epsilon$-neighborhood of~$v^\star$ (which itself is a
  surface of genus~0), there exist a representation where all
  edges between~$v^\star$ and $V\setminus v^\star$ are drawn
  inside the polygon.  After removing the vertex~$v^\star$ from the
  representation, we obtain a circular drawing of~$G$ with
  at most $k$ bundled crossings.
\end{proof}

\section{MSO$_2$ Formula for Testing Whether $\bco(G) \le k$}
\label{sec:formula}

For the class of formulas expressible in MSO$_2$, we refer to
Appendix~\ref{sec:mso2+courcelle}; see also the textbook of Courcelle
and Engelfriet~\cite{courcelle2012book} for more background.
We now construct an MSO$_2$ formula to express the following problem:
\begin{itemize}
\item Given a graph $G=(V, E)$ and a simple circular drawing $D_{\fedges}$
with $k$ bundled crossings so that $\fedges =\{e_1, \ldots, e_\beta\}$ is the set of frame edges (and $D_{\fedges}$ has no other edges) and $\fvertices =\{v_1, \ldots, v_\xi\}$ is the set of frame vertices (and $D_{\fedges}$ has no other vertices);
\item determine whether $G$ has a simple circular drawing with $k$ bundled crossings so that the frame edges and vertices occur as in $D_{\fedges}$.
\end{itemize}
This is based on Properties~\ref{property0-1}--\ref{property3} on page~\pageref{msoprops}:
we express them as
MSO$_2$ formulas.

Properties~\ref{property0-1}~and~\ref{property0-2} simply state that
a set of elements is partitioned into a certain number of disjoint subsets.
We use a formula stated by Bannister and Eppstein~\cite{Bannister_2014} to
express this in MSO$_2$.
For example, partitioning of a set~$E$ into~$E_0, E_1, \dots, E_\gamma$  disjoint subsets
can be done in the following way.

\[
\textsc{Partition}(E; E_0, \ldots, E_\gamma) =
(\forall e\in E) \big[
\big(
\bigvee_{i=0}^{\gamma} e \in E_i
\big)
\wedge
\big(
\bigwedge_{i \neq j} \neg (e \in E_i \wedge e \in E_j)
\big)
\big].
\]

We will additionally use the following formula to state that a vertex set $V'$
is the set of endpoints of an edge set $E'$:
\[
\textsc{Incident}(V', E') ~=~
(\forall e \in E')~(\forall v \in V(G))~
[ I(e, v) \Leftrightarrow v \in V' ].
\]

We now turn to the properties more specific to our fixed drawing $D_{\fedges}$ of $\beta \le 4k$ frame edges $\fedges=
\{f_1, f_2, \dots,  f_{\beta} \}$ whose endpoints are
$V(\fedges) = \{u_1, u_2, \dots,  u_{\xi} \}$, where $\xi \le 2\beta$.
As discussed in Section~\ref{sec:constructing-surface} and
Lemma~\ref{lem:region-planar}, this drawing induces a corresponding set of
regions $r_1, \dots, r_{\gamma}$.

Property~\ref{property1} ensures that certain
edges~$E_0 = \{e_1, e_2, \dots, e_{\beta}\}$ and
their endpoints~$V_0 = \{v_1, v_2, \dots, v_{\xi}\}$ of the graph~$G$
induce a graph isomorphic to~$(V(\fedges), \fedges)$. This can be modeled by the following
formula.
\begin{align*}
\theta_3(\{v_1, v_2, \dots, v_{\xi}\}, E_0) ~=~   & \big(\forall i, j\in
\{1, 2, \dots, \xi\}\big) ~ \\
\Big[
& \big( (\exists e \in E_0) ~ I(e, v_i) \wedge I(e, v_j) \big) \Leftrightarrow \\
& \big( (\exists f \in \fedges) ~ I(f, u_i) \wedge I(f, u_j) \big)
\Big].
\end{align*}

To express Properties~\ref{property2-1}~and~\ref{property2-2} we
introduce some helpful notation.
We denote the set of boundary vertices of the region
$r_i$ as $\partial r_i$, where $\partial r_i$ is ordered cyclically as in $D_\fedges$.
For example, for the one bundled crossing
case in Fig.~\ref{fig:one-bc}, $\partial r_1 = \{u_1, u_3, v_3,
v_1\}$ and $\partial r_3 = \{u_3, u_2\}$.
For each vertex $v_i\in V(\fedges)$, $i =1, 2, \dots, \xi$, we denote the
indices of regions incident
to $v_i$ in the drawing $D_{\fedges}$ as
$\sigma(i)$,
that is, $\sigma(i)=\{j \mid  v_i \in \partial r_j\}$.
Then Properties~\ref{property2-1} and \ref{property2-2} can be
expressed in MSO$_2$ as follows:

\[
\theta_4(V_0) ~=~ \neg \big(\exists v\in V(G)\setminus V_0\big)~
\Big[\bigvee_{i\neq
j} e_i\in E_i \wedge e_j\in E_j \wedge I(e_i, v) \wedge I(e_i,
v)\Big],
\]

\begin{align*}
  \theta_5(\{v_1, v_2, \dots,  v_{\xi}\},E_0)
  ~=~& \big(\forall i \in \{1, 2, \dots, \xi\}\big)~(\forall e\in E) \\
  & \Big[I(e, v_i) \Rightarrow \big[e\in E_0 \vee (\exists j \in
    \sigma(i)) \left[e \in E_j\right]\big]\Big].
\end{align*}

Finally, we turn to Property~\ref{property3}.
First, note that testing planarity of a graph~$G$ can be expressed as
follows where the formula for \textsc{Minor}$_H(G)$ does not need edge
set quantification (i.e., it is in MSO) \cite[Corollaries 1.14 and
1.15]{courcelle2012book}:
\[
\textsc{Planar}(G) ~=~ \neg \textsc{Minor}_{K_5}(G)
\wedge \neg\textsc{Minor}_{K_{3,3}}(G).
\]

Now, we describe the MSO-transduction\footnote{Note that, a \emph{transduction} is essentially just the name for the operation of constructing the model of one graph/structure from the model of another graph/structure in the language of MSO.}
$\tau_i$ of $G$ to $G^*_i$ (for each region $r_i$; see
Section~\ref{sec:kbc}) subject to the variables $v_1, \ldots, v_\xi,
V_1, \ldots, V_\eta, e_1, \ldots, e_\beta, E_1, \ldots, E_\gamma$.
Note that in our transduction, the input uses the format allowing for edge set quantification (i.e., where we have the objects $V \cup E$ and the binary incidence function $I$), but our output involves the format without edge set quantifications (i.e., where we have the objects $V$ and the binary adjacency function $adj$).
Recall that $\sigma(i) = \{j_1, \ldots, j_{\zeta}\}$ denotes the indices of the frame vertices incident to $r_i$ and is ordered cyclically as in $D_{\fedges}$.
Further, let $V_{l_1}, \ldots, V_{l_{\alpha}}$ be the sets corresponding to the arcs of the boundary of $r_i$ (in order).
With this notation, we can now set up the transduction $\tau_i$ which describes our graph $G^*_i$ in terms of our variables (note that in the statement of \cite[Theorem 7.10]{courcelle2012book} our variables are the \emph{parameters}).
Note that the symbols $h, b_1, b_2, \ldots, b_\alpha$ are new objects that are added in the construction (namely, the hub and boundary vertices of $G^*_i$).
Further, let $C$ be the cycle of the wheel.  Then $V(C)=\{v_{j_1},
\dots, v_{j_{\zeta}}, b_1, \dots, b_{\alpha} \}$.
For each vertex $x \in V(C)$, let $N_C(x)$ be the set
consisting of the two neighbors of~$x$ in~$C$.

\smallskip
\noindent
\underline{The transduction $\tau_i$:}

\noindent
$\displaystyle V(G^*_i) := \; \{h\} \cup V(C) \cup
\bigcup_{j=1}^{\alpha} V_{l_j} $;

\vspace{-\baselineskip}

\begin{align*}
  adj_{G^*_i}(u,v) := (u \ne v) ~\wedge ~\Big(
  & \big( (\exists e \in E_i) ~ (I(e,v) \wedge I(e,u))\big)\\
  & \vee \big((h = u) \wedge (v \in V(C)) \big)
    \vee \big((h = v) \wedge (u \in V(C)) \big) \\
  & \vee \left( \bigvee_{j=1}^{\alpha} u = b_j \wedge v \in
    V_{l_j} \right) \vee \left(\bigvee_{j=1}^{\alpha} v = b_j
    \wedge u \in V_{l_j} \right) \\ 
  & \vee \big( (u \in V(C)) \wedge (v \in N_C(u)) \big) \Big). 
\end{align*}
With this transduction $\tau_i$ and the expression $\textsc{Planar}(G)$, we can now apply~\cite[Theorem 7.10]{courcelle2012book} to obtain the MSO$_2$ formula $\iota_i$ which when applied to $G$ and our parameters allows us to express that $G^*_i$ is planar.
Namely, by taking the conjugation of all of these $\iota_i$ we obtain the needed MSO$_2$ formula $\iota$ (which can be applied to $G$ and our variables) to express that all of the $G^*_i$'s are planar.

Now we construct the MSO$_2$ formula corresponding to
Properties~\ref{property0-1}--\ref{property3}.  The formula depends on
the drawing $D_\fedges$ of the set of frame edges~$\fedges$.

\begin{align*}
  \textsc{realizable}_{D_\fedges}(G) ~\equiv~
  & (\exists e_1, \ldots, e_\beta, E_0, E_1,\dots,E_\gamma, v_1,v_2,\dots,v_{\xi}, V_0, V_1,\dots,V_\eta )
  \\
  & \Big[E_0=\{e_1, \ldots, e_\beta\} \wedge V_0=\{v_1,v_2,\dots,v_{\xi}\}\\
  & \wedge~ \textsc{Partition}(E; E_0, E_1, \ldots, E_\gamma)\\
  & \wedge~ \textsc{Partition}(V; V_0, V_1 \ldots, V_\eta)\\
  & \wedge~ \textsc{Incident}(V_0, E_0)\\
  & \wedge~ \theta_3(V_0, E_0)~ \wedge~\theta_4(V_0)~ \wedge~\theta_5(V_0,E_0) \\
  & \wedge~ \big(\forall j \in \{1, 2, \dots, \eta\}\big)\big(\forall v \in V_j\big)
  \big(\forall e \in E\big)\big[I(e, v) \Rightarrow e \in E_{i_j}\big]\\
  &\wedge~\iota(e_1, \ldots, e_\beta,E_1,\dots,E_\gamma, v_1,v_2,\dots,v_{\xi}, V_1,\dots,V_\eta )
\Big].
\end{align*}

\end{document}